\title{The Dirac oscillator, generalised parastatistics and colour Lie superalgebras}
\author[P.S.~Isaac]{Phillip S. Isaac}
\author[M.~Ryan]{Mitchell Ryan}
\newcommand{\commabefore}[1]{%
\if\relax\detokenize{#1}\relax%
	#1%
\else%
	,#1%
\fi%
}
\newcommand{\commaafter}[1]{%
\if\relax\detokenize{#1}\relax%
	#1%
\else%
	#1,%
\fi%
}
\DeclareDocumentCommand\cbrak{ l m m }{\braces#1{\llbracket}{\rrbracket}{#2,#3}} 
\newcommand{\adjt}[1]{#1^{\dagger}}				
\newcommand{\acreate}{a^+}					
\newcommand{\aann}{a^-}						
\newcommand{\apm}[1][\pm]{a^{#1}}				
\newcommand{\bcreate}{b^+}					
\newcommand{\Bcreate}{\vec{b}^+}				
\newcommand{\bcreatespin}[1][]{\bcreate_{\sigma\commabefore{#1}}}
\newcommand{\bann}{b^-}						
\newcommand{\Bann}{\vec{b}^-}					
\newcommand{\bannspin}[1][]{\bann_{\sigma\commabefore{#1}}}
\newcommand{\bpm}[1][\pm]{b^{#1}}				
\newcommand{\bpmspin}[1][\pm]{\bpm[#1]_{\sigma}}		
\newcommand{\bfa}{\mathfrak{bf}}				
\newcommand{\ccreate}{c^+}					
\newcommand{\cann}{c^-}						
\newcommand{\cpm}[1][\pm]{c^{#1}}				
\newcommand{\CLS}{C_{LS}}					
\newcommand{\Hilb}{\mathcal{H}}					
\newcommand{\Hvac}{H_0}						
\newcommand{\HxH}{C^{\times}}					
\newcommand{\screate}{s^+}					
\newcommand{\Screate}{\vec{s}^+}				
\newcommand{\screatespin}{\screate_{\sigma}}			
\newcommand{\sann}{s^-}						
\newcommand{\Sann}{\vec{s}^-}					
\newcommand{\sannspin}{\sann_{\sigma}}				
\newcommand{\spm}[1][\pm]{s^{#1}}				
\newcommand{\spmspin}[1][\pm]{\spm[#1]_{\sigma}}		
\newcommand{\Sc}{{s_0}}						
\newcommand{\tcreate}{t^+}					
\newcommand{\tpm}[1][\pm]{t^{#1}}				
\newcommand{\CC}{\mathbb{C}}					
\newcommand{\Cl}[2][]{C\ell_{\commaafter{#1}#2}}		
\newcommand{\gen}[1]{\langle#1\rangle}				
\newcommand{\gl}{\mathfrak{gl}}					
\newcommand{\so}{\mathfrak{so}}					
\newcommand{\Hsch}{H_{\textup{Sch}}}				
\newcommand{\Id}{I}						
\newcommand{\labelian}{\mathfrak{a}}				
\renewcommand{\laplacian}{\Delta}				
\newcommand{\osp}{\mathfrak{osp}}				
\newcommand{\orth}{\mathfrak{o}}				
\newcommand{\pso}{\mathfrak{pso}}				
\newcommand{\spl}[1][]{\mathfrak{sl}}		 		
\newcommand{\vac}{\ket{0}}					
\newcommand{\vactilde}{\ket*{\widetilde{0}}}			
\renewcommand{\vec}[1]{\boldsymbol{\mathrm{#1}}}		
\newcommand{\ZZ}{\mathbb{Z}}					
\newcommand{\Ztwo}[1][]{\ZZ_2^{#1}}				
\newcommand{\Ztzt}{\Ztwo\times\Ztwo}				
\newcommand{\dfn}[1]{{\color{red!70!black}\itshape #1}}
\newtheorem{thm}{Theorem}[section]
\newtheorem{lem}[thm]{Lemma}
\newtheorem{prop}[thm]{Proposition}
\theoremstyle{definition}
\theoremstyle{remark}
\newtheorem{rmk}[thm]{Remark}
\definecolor{lightblue}{RGB}{137,207,240}
\Crefname{lem}{Lemma}{Lemmas}
\Crefname{ex}{Example}{Examples}
\address[P.S.~Isaac and M.~Ryan]{School of Mathematics and Physics, University of Queensland, St.\ Lucia, QLD 4072, Australia}
\email[P.S.~Isaac]{\href{mailto:psi@maths.uq.edu.au}{psi@maths.uq.edu.au}}
\email[M.~Ryan]{\href{mailto:mitchell.ryan@uq.edu.au}{mitchell.ryan@uq.edu.au}}
\keywords{Color Lie (super)algebras, Graded Lie (super)algebras, Dirac equation}
\subjclass[2020]{17B75, 17B70, 81Q05}
\begin{document}

\begin{abstract}
	We study the Dirac oscillator in one, two and three spatial dimensions, showing that the corresponding ladder operators realise the \( \mathbb{Z}_2\times\mathbb{Z}_2 \)-graded Lie superalgebras \( \mathfrak{pso}(3|2) \), \( \mathfrak{pso}(3|4) \) and \( \mathfrak{osp}_{01}(1|2) \oplus \mathfrak{sl}_{10}(1|1)\). These algebraic structures are related to parastatistics and their Fock spaces. We demonstrate that these colour algebras and Fock spaces are useful for analysing the Dirac oscillator and its eigenspaces, particularly in \( (1+3) \)-dimensions. Apart from this current work, to our knowledge, the recent article by Ito and Nago \cite{IN2025} is the only other such work that makes use of \( \mathbb{Z}_2\times \mathbb{Z}_2 \) graded colour Lie superalgebras in a relativistic setting. 
\end{abstract}

\maketitle

\section{Introduction}

In the last decade, there has been a resurgence of interest in colour Lie (super)algebras, particularly
associated with \( \Ztwo[n] \) colouring. 
This was initially due to a connection made in~\cite{Tolstoy2014b} between
\( \Ztzt \)-graded colour Lie superalgebras and the parastatistics of Green
\cite{Green1953,GreenbergMessiah1965}. 
Of particular note are the articles by Aizawa, Kuznetsova,
Tanaka and Toppan~\cite{AKTT2016,AKTT2017} that demonstrate the utility of \( \Ztzt \)-graded colour Lie superalgebras as dynamical symmetry algebras of the L\'evy-Lebond equation
\cite{LevyLeblond1967}, which is a non-relativistic analogue of the Dirac equation~\cite{Dirac1928}. 
Earlier work of Toppan~\cite{Toppan2015}
demonstrated, among other things, that Schr\"{o}dinger operators can be considered to have a
\( \Ztwo \)-graded superalgebra symmetry. This was further developed in~\cite{AKTT2016,AKTT2017}, where the L\'evy-Lebond
equation was obtained by effectively taking a square root of operators, which, loosely speaking, led to the \( \Ztzt \)-grade of the Schr\"{o}dinger case being doubled, i.e.\ giving rise to a \( \Ztzt \)-graded colour Lie superalgebra symmetry. Various applications of \( \Ztzt \)-graded colour Lie superalgebras have continued to be explored recently in the context of generalised quantum mechanics (see \cite{AIT2024,Bruce2024} and references therein).

A very recent study by one of the current authors~\cite{Ryan2025} revealed that there is a
\( \Ztzt \)-graded colour Lie superalgebra associated with the L\'evy-Lebond
equation generated by ladder operators. 
The resulting approach and methods presented in~\cite{Ryan2025} have further inspired the results of
the current work, which, as far as we know, have yet to be discussed in the literature of this burgeoning field. 
Our approach in the current article is to investigate the relativistic setting originally posed by Dirac, and to revisit graded
algebras of ladder operators through a modern coloured lense. 
Specifically, our focus is the Dirac oscillator in one, two and three spatial dimensions. We emphasize that so far \( \Ztzt \)-graded colour Lie superalgebras appear to have been utilised in a relativistic setting only recently by Ito and Nago \cite{IN2025}, and the current work aims to build on that by returning to an algebraic study of the Dirac oscillator.

The Dirac oscillator itself is a well-studied type of Dirac equation that involves linear terms
in both position and momentum, originally introduced by It\^o, Mori and
Carriere in~\cite{IMC1967}. It was later derived independently by Moshinsky and Szczepaniak
\cite{MoshinskySzczepaniak1989}, from which many works followed. In particular, exact solutions
have been derived and their properties explored in many key cases (see, for example,
\cite{MorenoZentella1989,BMyRNYSB1990,deLange1991,CFLU1991,Crawford1993,Villalba1994,QuesneTkachuk2005} and references
therein). Also, algebraic aspects of the Dirac oscillator related to supersymmetry and
supersymmetric quantum mechanics have been examined in a variety of contexts
\cite{QuesneMoshinsky1990,BeckersDebergh1990,Quesne1991,WWJ2021}. Applications of the Dirac
oscillator have been presented within the last 20 years in an assortment of fields, including quantum
optics~\cite{BMDS2007a,BMDS2007b}, an experimental microwave realisation~\cite{FVSBKMS2013},
dynamics of charge carriers in graphene~\cite{QuimbayStrange2013,Boumali2015} and more recently
on cosmic strings, topological defects and spacetime~\cite{BM2018,HHdeM2019,Ahmed2022} and coherent
states~\cite{NogamiToyama1996,GR2021}. 

In the current article, we present a confluence of the notion of parastatistics with the Dirac oscillator.
Following on from the work of Tolstoy~\cite{Tolstoy2014b}, there has been much recent interest in
parastatistics associated with \( \Ztzt \)-graded colour Lie superalgebras and
Fock (para)spaces~\cite{SVdJ2018,Toppan2021a,Toppan2021b,Zhang2023,SVdJ2024,SVdJ2025}.  In fact, we make use of the same algebraic structures used to study parabosons and parafermions in the works of Stoilova and Van der Jeugt \cite{SVdJ2005-1,SVdJ2005-2,SVdJ2008}; however, the corresponding statistics are distinct from parastatistics.  

An outline of the article is as follows. In \cref{sec:diracoscillator} we briefly review the Dirac oscillator and fix
our notation. \Cref{sec:parastatistics} then provides a brief summary of parastatistics, and give a relevant generalisation.
Sections~\ref{sec:1+1}, \ref{sec:1+2} and \ref{sec:1+3} then present the Dirac oscillator
in spatial dimensions 1, 2 and 3 respectively, including a presentation of the explicit ladder
operators, the related \( \Ztzt \)-graded colour Lie superalgebra in each
case, and present technical aspects of the determination of the spectrum in each case.

\section{Dirac Oscillator}\label{sec:diracoscillator}
Throughout this paper, we will work in natural units: \( \hbar = c = 1 \).
The Dirac oscillator is obtained by introducing a linear potential
via coupling \( \vec{p}\rightsquigarrow \vec{p} - \beta im\omega\vec{r} \) in the free Dirac equation;
explicitly, the Hamiltonian for the Dirac oscillator is
\begin{align*}
	H &= \vec{\alpha}\vdot(\vec{p} - \beta im\omega\vec{r}) + \beta m
\end{align*}
where \( \vec{r} = (x_1,\ldots,x_n) \) and \( \vec{p} = (-i\pdv{x_1},\ldots,-i\pdv{x_n}) \)
(position representation);
\( m \) is the mass; \( \omega \) is the oscillation frequency;
and the components of the vector operator \( \vec{\alpha} = (\alpha_1,\ldots,\alpha_n) \) and \( \beta \)
form generators of the Clifford algebra \( \Cl{n+1}(\CC) \), so that
\[
	\acomm{\alpha_i}{\alpha_j} = \delta_{ij}, \qquad
	\acomm{\alpha_i}{\beta} = 0, \qquad
	\acomm{\beta}{\beta} = 0.
\]
We use the convention that \( (x,y,z) = (x_1,x_2,x_3) \),
and use the notation \( \pdv{x} = \partial_x \), etc.
Squaring \( H \) and taking a non-relativistic limit yields the non-relativistic quantum harmonic oscillator with a strong spin-orbit coupling~\cite{MoshinskySzczepaniak1989}.

In three spacial dimensions or less,
the Dirac representation gives a specific matrix representation for the Clifford algebra generators:
\begin{align*}
	\beta = 
	\begin{pmatrix}
		\Id & 0\\
		0 & -I
	\end{pmatrix},
	\qquad
	\alpha_i = 
	\begin{pmatrix}
		0 & \sigma_i\\
		\sigma_i & 0
	\end{pmatrix}
\end{align*}
where \( \sigma_i \) are the Pauli matrices.
The Dirac representation allows us to give a physical interpretation to the Clifford algebra elements.
For example,
\[
	S_i =
	\begin{pmatrix}
		\sigma_i & 0\\
		0 & \sigma_i
	\end{pmatrix}
	\in\Cl{3}(\CC)
\]
are the spin matrices.

\section{Parastatistics}\label{sec:parastatistics}
Recall that the creation and annihilation operators for fermions \( f_i^{\pm} \) (\( i=1,\ldots,m \)) and bosons \( b_i^{\pm} \) (\( i=1,\ldots,n \)) satisfy the following canonical (anti)commutation relations
\begin{align} 
	\begin{split}
		\acomm{f_i^+}{f_j^-} &= \delta_{ij}, \\
		\comm{b_i^+}{b_j^-} &= \delta_{ij},
	\end{split}
	\begin{split}
		\acomm{f_i^\pm}{f_j^\pm} &= 0, \\
		\comm{b_i^\pm}{b_j^\pm} &= 0.
	\end{split}
	\label{eq:fermionboson}
\end{align}
The corresponding free Hamiltonians are
\begin{equation}\label{eq:fermionbosonHamiltonian}
	H_f = \frac{1}{2}\sum_{j} \omega_j\comm{f_j^+}{f_j^-}
	\qquad \text{and} \qquad
	H_b = \frac{1}{2}\sum_{j} \omega_j\acomm{b_j^+}{b_j^-}
\end{equation}
for fermions and bosons respectively,
where the constants \( \omega_j \) represent the energy of a free particle in quantum state \( j \).
The fermion and boson operators are ladder operators for their respective free Hamiltonian:
\begin{align}
	\comm{H_f}{f_k^\pm} &= \pm \omega_k f_k^{\pm}, & \comm{H_b}{b_k^\pm} &= \pm \omega_k b_k^{\pm}.\label{eq:fermionbosonladder}
\end{align}

Parastatistics generalises the fermion and boson operators to
\dfn{parafermion} operators \( f_i^{\pm} \) and \dfn{paraboson} operators \( b_i^{\pm} \),
which need not satisfy the canonical (anti)commutation relations~\eqref{eq:fermionboson},
and instead satisfy~\cite{Green1953,GreenbergMessiah1965} the trilinear relations
\begin{align}
	\begin{split}
		\comm{\comm{f_i^+}{f_j^-}}{f_k^-} &= -2\delta_{ik}f_j^-, \\
		\comm{\acomm{b_i^+}{b_j^-}}{b_k^-} &= -2\delta_{ik}b_j^-,
	\end{split}
	\begin{split}
		\comm{\comm{f_i^-}{f_j^-}}{f_k^-} &= 0, \\
		\comm{\acomm{b_i^+}{b_j^-}}{b_k^-} &= 0.
	\end{split}
	\label{eq:parafermionbosonpartial}
\end{align}
Note that ordinary bosons and fermions satisfy the above relations,
but there exist operators which satisfy \eqref{eq:parafermionbosonpartial} but not \eqref{eq:fermionboson}.
These trilinear relations~\eqref{eq:parafermionbosonpartial} were chosen to ensure that the parafermion and paraboson operators remain ladder operators for their respective free Hamiltonians~\cite{Green1953}, i.e.\ still satisfy relations~\eqref{eq:fermionbosonladder}.

By applying the (super) Jacobi identity and taking adjoints of the parastatistics relations~\eqref{eq:parafermionbosonpartial} (noting that \( \adjt{(f_i^\pm)}=f_i^\mp \) and \( \adjt{(b_i^\pm)} = b_i^\mp \)),
we obtain relations for all possible triples of parafermions or parabosons.
We can express all such relations with the following two expressions
\begin{align}
	\comm{\comm{f_i^{\zeta}}{f_j^{\eta}}}{f_k^{\xi}} &= \abs{\xi - \eta}\delta_{jk}f_i^\zeta - \abs{\xi - \zeta}\delta_{ik}f_j^{\eta}\label{eq:parafermion},\\
	\comm{\acomm{b_i^{\zeta}}{b_j^{\eta}}}{b_k^{\xi}} &= (\xi - \eta)\delta_{jk}b_i^{\zeta} + (\xi - \zeta)\delta_{ik}b_j^{\eta}\label{eq:paraboson}
\end{align}
where \( \zeta,\eta,\xi\in\{+,-\} \) are interpreted as \( +1 \) and \( -1 \) in the above equations.
The parafermion relations~\eqref{eq:parafermion} give rise to the Lie algebra \( \orth(2m+1) \)~\cite{KamefuchiTakahashi1962} while the paraboson relations~\eqref{eq:paraboson} relations give rise to the Lie superalgebra~\( \osp(1|2n) \)~\cite{GanchevPalev1980}.

When considering a mixed system of parabosons and parafermions,
Greenberg and Messiah~\cite{GreenbergMessiah1965} found that there were four types of (anti)commutation rules between parafermions and parabosons: the two trivial cases where the paraboson operators commute or anticommute with the parafermion operators;
the relative parafermion relations
\begin{align}\label{eq:relativeparafermion}
	\begin{split}
		\comm{\comm{f_i^{\zeta}}{f_j^{\eta}}}{b_k^{\xi}} &= 0,\\
		\comm{\comm{f_i^{\zeta}}{b_j^{\eta}}}{f_k^{\xi}} &= -\abs{\xi-\zeta}\delta_{ik}b_j^{\eta},
	\end{split}
	\begin{split}
		\comm{\acomm{b_i^{\zeta}}{b_j^{\eta}}}{f_k^{\xi}} &= 0,\\
		\acomm{\comm{f_i^{\zeta}}{b_j^{\eta}}}{b_k^{\xi}} &= (\xi-\eta)\delta_{jk}f_i^{\zeta}
	\end{split}
\end{align}
and the relative paraboson relations
\begin{align}\label{eq:relativeparaboson}
	\begin{split}
		\comm{\comm{f_i^{\zeta}}{f_j^{\eta}}}{b_k^{\xi}} &= 0, \\
		\acomm{\acomm{f_i^{\zeta}}{b_j^{\eta}}}{f_k^{\xi}} &= \abs{\xi-\zeta}\delta_{ik}b_j^{\eta},
	\end{split}
	\begin{split}
		\comm{\acomm{b_i^{\zeta}}{b_j^{\eta}}}{f_k^{\xi}} &= 0,\\
		\comm{\acomm{f_i^{\zeta}}{b_j^{\eta}}}{b_k^{\xi}} &= (\xi-\eta)\delta_{jk}f_i^{\zeta}.
	\end{split}
\end{align}
Of particular note is that the relative paraboson relations~\eqref{eq:relativeparaboson},
together with~\eqref{eq:parafermion} and~\eqref{eq:paraboson},
give rise to the colour Lie superalgebra%
\footnote{We are using the \( \pso \) notation from~\cite{SVdJ2018}. In the notation of~\cite{Tolstoy2014b}, \( \pso(2m+1|2n) \) is written as \( \osp(1,2m|2n,0) \).}%
~\( \pso(2m+1|2n) \)~\cite{Tolstoy2014b}.
The link between colour Lie superalgebras and parastatistics has since been investigated further via the construction of  \( \Ztzt \)-graded parastatistic Fock spaces and para-spaces in the context of \( \Ztzt \)-graded quantum mechanics ~\cite{SVdJ2018,Toppan2021a,Toppan2021b,Zhang2023,SVdJ2024,SVdJ2025}.

The parastatistics relations~\eqref{eq:parafermionbosonpartial}
were chosen to ensure that the parastatistics operators are ladder operators for the corresponding free Hamiltonian (i.e.\ satisfy~\eqref{eq:fermionbosonladder}).
However, relations~\eqref{eq:parafermionbosonpartial}
are not the only choice of trilinear relations that yield ladder operators.
Indeed, the following parafermion-like and paraboson-like trilinear relations
\begin{align}
	\comm{\comm{f_i^+}{f_j^-}}{f_k^-} &= -2\delta_{ij}f_k^- + 2\delta_{jk}f_i^- - 2\delta_{ki}f_j^-, &
	\comm{\comm{f_i^-}{f_j^-}}{f_k^-} &= 0, \label{eq:notparafermion}\\
\comm{\acomm{b_i^+}{b_j^-}}{b_k^-} &= -2\delta_{ij}b_k^- + 2\delta_{jk}b_i^- - 2\delta_{ki}b_j^-, &
	\comm{\acomm{b_i^-}{b_j^-}}{b_k^-} &= 0 \label{eq:notparaboson}
\end{align}
ensure that \( \comm{H_f}{f_k^\pm} = \pm\omega f_k^\pm  \) and \( \comm{H_b}{b_k^\pm} = \pm\omega b_k^\pm \),
where \( H_f \) and \( H_b \) are defined as in~\eqref{eq:fermionbosonHamiltonian} with \( \omega_j = \omega/m \) and \( \omega_j = \omega/n \) (respectively) for every \( j \).
Relations between all other triples of operators appearing in either~\eqref{eq:notparafermion} or~\eqref{eq:notparaboson} can be obtained via the (super) Jacobi identity and by taking adjoints.
As we will see in~\cref{sec:1+2notparastatics} and~\cref{rmk:1+3notparastatistics},
the ladder operators for the \( (1+2) \)- and \( (1+3) \)-dimensional Dirac oscillator satisfy  relations~\eqref{eq:notparafermion} and~\eqref{eq:notparaboson}.

\begin{rmk}
	The operators satisfying relations~\eqref{eq:notparafermion} or~\eqref{eq:notparaboson} can only change the energy eigenvalues of \( H_f \) or \( H_d \) by the same amount \( \omega \) for each \( j \). This is more restrictive than the parastatistics relations~\eqref{eq:parafermionbosonpartial}, for which each parafermion or paraboson operator can change the energy eigenvalues by different amounts \( \omega_j \) depending on \( j \).
\end{rmk}

\section{Dirac Oscillator in (1+1)-dimensions}\label{sec:1+1}
The Dirac Oscillator Hamiltonian in \( (1+1) \)-dimensions is as follows
\begin{equation}\label{eq:1+1Hamiltonian}
	H = \alpha(-i\partial_x - \beta im\omega x) + \beta m
\end{equation}
where \( \alpha \) and \( \beta \) are gamma matrices (elements of the Clifford Algebra \( \Cl{2}(\CC) \)) satisfying
\[
	\acomm{\alpha}{\alpha}=\acomm{\beta}{\beta} = 2\Id, \qquad \acomm{\alpha}{\beta} = 0.
\]
The square of this Hamiltonian is
\[
	H^2 = -(\partial_x)^2 + m^2\omega^2x^2 + m^2 - \beta m\omega.
\]
Note that the familiar Harmonic oscillator Hamiltonian for the Schr\"odinger equation 
\begin{equation}\label{eq:Hsch1+1}
	\Hsch \coloneq \frac{1}{2m}(-i\partial_x)^2 + \frac{1}{2}m\omega^2x^2
\end{equation}
appears as a term in \( H^2 \).

By direct computation, we can verify that \( H^2 \) has the following ladder operators,
\begin{equation}\label{eq:1+1H2ladder}
	\begin{aligned}
		\bann &= -\sqrt{\frac{m\omega}{2}}x - \frac{1}{\sqrt{2m\omega}}\partial_x, & \comm{H^2}{\bann} &= -2m\omega\,\bann, \\
		\bcreate &= -\sqrt{\frac{m\omega}{2}}x + \frac{1}{\sqrt{2m\omega}}\partial_x, & \comm{H^2}{\bcreate} &= 2m\omega\,\bcreate,\\
		\sann &= -\frac{i}{2}(\beta \alpha + \alpha), & \comm{H^2}{\sann} &= -2m\omega\, \sann,\\
		\screate &= -\frac{i}{2}(\beta \alpha - \alpha), & \comm{H^2}{\screate} &= 2m\omega\,\screate.
	\end{aligned}
\end{equation}
These ladder operators for \( H^2 \) have the following relations with the Hamiltonian \( H \) itself:
\begin{align}
	\acomm{H}{\spm} &= \sqrt{2m\omega}\,\bpm, &
	\comm{H}{\bpm} &= \pm\sqrt{2m\omega}\,\spm.
	\label{eq:1+1interleave}
\end{align}

Notice that \( \bann \) and \( \bcreate \) are the familiar ladder operators for the non-relativistic Schr\"odinger harmonic oscillator.
As such, \( \bann,\bcreate \) satisfy the boson commutation relation
\[
	\comm{\bann}{\bcreate} = \Id.
\]
Meanwhile, in the Dirac representation, \( \sann \) and \( \screate \) become
\[
	\sann = 
	\begin{pmatrix}
		0 & \sigma_1\\
		0 & 0
	\end{pmatrix},
	\qquad
	\screate =
	\begin{pmatrix}
		0 & 0 \\
		\sigma_1 & 0
	\end{pmatrix}.
\]
We can interpret \( \sann \) as shifting the small component to the large component (hence, decreasing the energy of the bispinsor) and interpret \( \screate \) as shifting the large component to the small component (hence, increasing the energy level of the bispinor).

\begin{rmk}
	In the Dirac representation,
	\( \sann = \Sigma_- S_1 \) and \( \screate = \Sigma_+ S_1 \)
	where \( S_1 = \Id_{2\times 2}\otimes \sigma_1 \) is spin in the \( x \)-direction and 
	\( \Sigma_\pm = \sigma_\pm\otimes\Id_{2\times 2} \) are the so-called \( * \)-spin operators defined in~\cite{STS2010}.
\end{rmk}

The operators \( \sann,\screate \) satisfy the fermion anticommutation relations
\[
	\acomm{\sann}{\screate} = \Id, \qquad \acomm{\sann}{\sann} = \acomm{\screate}{\screate} = 0
\]
and the two pairs of ladder operators (\( \bann,\bcreate \) and \( \sann, \screate \)) mutually commute:
\[
	\comm{\bann}{\sann} = \comm{\bann}{\screate} = \comm{\bcreate}{\sann} = \comm{\bcreate}{\screate} = 0.
\]
Thus, we can interpret the operators \( \bann,\bcreate,\sann,\screate \) as a mixed boson and fermion system.
This interpretation gives rise to the boson-fermion Lie superalgebra  with the following basis:
\[
	0\text{-sector: } \Id,\bann,\bcreate, \qquad 1\text{-sector: } \sann,\screate.
\]
We will denote this Lie superalgebra by \( \bfa(1|1) \).
We could also extend the above Lie superalgebra to \( \bfa(1|1)\rtimes \gen{H^2,H} \) where \( H^2 \) is \( 0 \)-graded and \( H \) is \( 1 \)-graded.

\subsection{The colour Lie superalgebra}
Bosons and fermions automatically satisfy the paraboson~\eqref{eq:paraboson} and parafermion~\eqref{eq:parafermion} equations (respectively).
In addition, we can verify that \( \bann,\bcreate \) and \( \sann,\screate \) also satisfy the relative paraboson relations~\eqref{eq:relativeparaboson}.
The ladder operators \( \bann,\bcreate,\sann,\screate \) satisfying \cref{eq:paraboson,eq:parafermion,eq:relativeparaboson} generate the \( \Ztzt \)-graded colour Lie superalgebra%
\footnote{We are using the \( \pso \) notation from~\cite{SVdJ2018}. In the notation of~\cite{Tolstoy2014b}, \( \pso(3|2) \) is written as \( \osp(1,2|2,0) \).}
\( \pso(3|2) \)~\cite{Tolstoy2014b}.
This \( \Ztzt \)-graded colour Lie superalgebra has the following basis:
\begin{equation}\label{eq:pso32realisation}
	\begin{aligned}
		00\text{-sector: }& \Hsch, \Sc, (\bann)^2,(\bcreate)^2, & 01\text{-sector: }& \bann\sann, \bcreate\screate, \bann\screate, \bcreate\sann, \\
		10\text{-sector: }& \bann, \bcreate, & 11\text{-sector: }& \sann, \screate
	\end{aligned}
\end{equation}
and satisfies the following relations (with \( \Hsch \) given by~\eqref{eq:Hsch1+1} and \( \Sc \coloneqq -(1/2)\beta \)):
\begin{align*}
	\acomm{\bann}{\bcreate} &= 2\Hsch, &
	\comm{\sann}{\screate} &= -2\Sc,\\
	\acomm{\bann}{\sann} &= 2\bann\sann\text{, etc.}, &
	\comm{\Hsch}{\Sc} &= 0,\\
	\comm{\Hsch}{\bpm} &= \pm\omega\bpm, &
	\comm{\Hsch}{\spm} &= 0, \\
	\comm{\Sc}{\bpm} &= 0, &
	\comm{\Sc}{\spm} &= \pm\spm, \\
	\comm{\bpm[\eta]\spm[\zeta]}{\bpm[\xi]} &= \frac{1}{2}(\xi - \eta) \spm[\zeta], &
	\acomm{\bpm[\eta]\spm[\zeta]}{\spm[\xi]} &= \frac{1}{2}\abs{\xi - \zeta}\bpm[\eta]
\end{align*}
where \( \zeta,\eta,\xi\in\{+,-\} \) are interpreted as \( +1 \) and \( -1 \) in the above equations.
The remaining relations can be obtained via the colour Jacobi identity.
In particular, notice that this colour Lie superalgebra realises the relations
for \( \acomm{\bann}{\bcreate} \) and \( \comm{\sann}{\screate} \),
as opposed to the relations for \( \comm{\bann}{\bcreate} \) and \( \acomm{\sann}{\screate} \) of the Lie superalgebra \( \bfa(1|1) \).
Despite this, \( \pso(3|2) \) and \( \bfa(1|1) \) share the same Fock space in this representation.

One advantage of the \( \pso(3|2) \) colour Lie superalgebra is that it is semisimple (moreover it is simple and basic),
whereas the Lie superalgebra \( \bfa(1|1) \) is not.
This fact was exploited to study the Fock spaces of the \( \pso \) series in~\cite{SVdJ2018}.
In addition,
the \( \pso(3|2) \) algebra realises the \( \so(3) \) Lie algebra \( \gen{\Sc, \sann, \screate} \) as a subalgebra, whereas \( \bfa(1|1) \) does not.
(Note however, that \( \pso(3|2) \) realises a \emph{graded} version of \( \so(3) \).)

The square of the Hamiltonian \( H^2 = 2m\Hsch - 2m\omega\Sc + m^2\Id \) is 
\emph{not} a member of \( \pso(3|2) \) but \emph{is} a member of \( \pso(3|2)\oplus\CC\Id \) and is \( 00 \)-graded in this latter algebra.

\subsection{Including the Hamiltonian as a homogeneous element}\label{sec:makeHhomogeneous}
Notice that
\[
	H = \sqrt{2m\omega}(\bcreate\sann + \bann\screate) - 2m\Sc
\]
is an element of \( \pso(3|2) \),
but is not homogeneous.
Consequently, the ladder operator relations with \( H \) in~\eqref{eq:1+1interleave} are not realised.


To remedy this,
we can instead consider the shifted Hamiltonian
\[
	\Hvac \coloneq H - m\beta = \sqrt{2m\omega}(\bcreate\sann + \bann\screate)
\]
which is \( 01 \)-graded.
This Hamiltonian \( \Hvac \) satisfies the same ladder operator relations~\eqref{eq:1+1interleave} as \( H \),
and hence realises these relations within the colour Lie superalgebra \( \pso(3|2) \).
In fact, this shifted Hamiltonian is useful in its own right, for example,
the \( (1+2) \)-dimensional shifted Hamiltonian is used to model monolayer graphene~\cite{QuimbayStrange2013,Boumali2015}.

The main difference between \( \Hvac \) and \( H \) is that the vacuum state for \( \Hvac \) has energy \( 0 \) whereas the vacuum state for \( H \) has energy \( m \)
(see \cref{lem:betavac,lem:H0vac} below).

This shifted Hamiltonian \( \Hvac \) has the following relations within the colour Lie superalgebra
\begin{align*}
	\acomm{\Hvac}{\Hvac} &= 4m\Hsch + 4m\omega \Sc = 2\Hvac^2,\\
	\comm{\Hvac}{\bpm} &= \pm\sqrt{2m\omega}\,\spm,\\
	\acomm{\Hvac}{\spm} &= \sqrt{2m\omega}\,\bpm,\\
	\comm{\Hsch}{\Hvac} &= \sqrt{2m\omega}(\bcreate\sann - \bann\screate),\\
	\comm{\Sc}{\Hvac} &= -\sqrt{2m\omega}(\bcreate\sann - \bann\screate).
\end{align*}

\begin{rmk}
	Considering the shifted Hamiltonian is not the only way to resolve the non-homogeneity of \( H \).
	One alternative is to distinguish the two appearances of \( \beta \) in the Hamiltonian by enlarging the Clifford algebra from \( \Cl{2}(\CC) \) to \( \Cl{3}(\CC) \). Another alternative is to use a non-faithful non-graded representation of \( \pso(3|2) \) onto the Hilbert space.

	In general, \( H \) need not be homogeneous (even though a homogeneous Hamiltonian would be useful).
	We will see in \cref{sec:1+3} that, in \( (1+3) \)-dimensions, neither that Hamiltonian nor its square can be homogeneous.
\end{rmk}

\subsection{Fock space and the spectrum}

Here, we give details about the spectrum of \( H \), not only for completeness, but also for insight into the higher dimensional cases presented in later sections. For this \( (1+1) \)-dimensional case, our approach here derives the spectrum without making much use of the \( \Ztzt\)-graded colour superalgebra, unlike the \( (1+3) \)-dimensional case studied later in the article. 

Since \( H^2 \) is the square of the self adjoint operator \( H \), its spectrum must be non-negative.
Thus, there must exist a vacuum state \( \vac \) such that \( \bann\vac = \sann\vac = 0 \) 
(otherwise we could continually apply \( \bann \) or \( \sann \) to obtain negative eigenvalues).

\begin{lem}\label{lem:betavac}
	\( \beta\vac = \vac \)
\end{lem}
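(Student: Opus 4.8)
The plan is to pin down the state $\vac$ using the defining relations of the ladder operators and the structure of $H^2$. Recall that $\vac$ is characterised (up to scalar) by $\bann\vac = \sann\vac = 0$, and that $H^2 = 2m\Hsch - 2m\omega\Sc + m^2\Id$ with $\Sc = -\tfrac12\beta$. Since $\Hsch$ annihilates the $1$-sector operators and $\bann\vac=0$ forces $\vac$ to be the Gaussian ground state $\psi_0(x)$ of the Schr\"odinger oscillator, we have $\Hsch\vac = \tfrac{\omega}{2}\vac$. Hence $H^2\vac = \big(m\omega - 2m\omega\Sc + m^2\big)\vac = \big(m\omega + m\omega\beta + m^2\big)\vac$. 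On the other hand, $\bann\vac=\sann\vac=0$ means $\vac$ lies in the joint kernel, so applying $H^2$ we should land on the lowest possible eigenvalue, namely $m^2$ (the minimum of the spectrum of $H^2$ is $m^2$, since $H^2 = \text{(non-negative oscillator part)} + m^2 - \beta m\omega$ and the $\beta$ term contributes $\pm m\omega$).

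More directly, I would argue as follows. First I would observe that the joint kernel condition $\sann\vac = 0$ already constrains the ``spinor'' part of $\vac$: in the Dirac representation $\sann = \left(\begin{smallmatrix}0 & \sigma_1\\ 0 & 0\end{smallmatrix}\right)$, so $\sann\vac = 0$ forces the large component of the bispinor to be annihilated by $\sigma_1$ acting into the small slot --- but a cleaner route is representation-independent: from $\acomm{\sann}{\screate}=\Id$ and $\comm{\Sc}{\spm}=\pm\spm$ we get that $\Sc$ has eigenvalues shifted by $\pm 1$ along the fermionic tower, and the bottom state (with $\sann\vac=0$) is the lowest $\Sc$-weight, i.e. $\Sc\vac = -\tfrac12\vac$, equivalently $\beta\vac = \vac$. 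Concretely, $\Sc = -\tfrac12\beta$ together with $\beta^2 = \Id$ means $\Sc$ has eigenvalues $\pm\tfrac12$; the fermion number operator built from $\screate\sann$ distinguishes the two, and $\sann\vac=0$ selects the eigenvalue $-\tfrac12$ for $\Sc$, hence $+1$ for $\beta$.

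The key steps in order: (1) recall $\vac$ is defined by $\bann\vac = \sann\vac = 0$; (2) note that $\sann,\screate$ and $\Sc=-\tfrac12\beta$ span (a realisation of) $\so(3)$, or at least satisfy $\acomm{\sann}{\screate}=\Id$, $\acomm{\sann}{\sann}=\acomm{\screate}{\screate}=0$, $\comm{\Sc}{\spm}=\pm\spm$; (3) from these relations, $\Sc$ restricted to the two-dimensional space $\spn\{\vac, \screate\vac\}$ is diagonal with eigenvalues $-\tfrac12$ on $\vac$ (forced by $\sann\vac=0$, since $\sann$ lowers the $\Sc$-eigenvalue and there is nothing below $\vac$) and $+\tfrac12$ on $\screate\vac$; (4) therefore $\Sc\vac = -\tfrac12\vac$, and since $\Sc = -\tfrac12\beta$, this is exactly $\beta\vac = \vac$. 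An even shorter version: $\acomm{\sann}{\screate}=\Id$ gives $\screate\sann + \sann\screate = \Id$, so $\screate\sann\vac + \sann\screate\vac = \vac$; the first term vanishes, giving $\sann\screate\vac = \vac$, and then $\Sc\screate\vac = (\screate\Sc + \screate)\vac$... one checks $\Sc$ on $\vac$ directly via $\comm{\Sc}{\sann}= -\sann$ applied to $\screate\vac$.

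The main obstacle is making rigorous the claim that ``$\vac$ is the lowest $\Sc$-weight vector,'' i.e.\ ruling out the possibility that $\Sc\vac = +\tfrac12\vac$ (which would give $\beta\vac=-\vac$). This is handled by noting that if $\beta\vac = -\vac$ then $\screate\vac$ would be a \emph{new} state with $\sann(\screate\vac) = \vac \neq 0$ and strictly lower energy under $H^2$ (since $H^2 = 2m\Hsch - 2m\omega\Sc + m^2$ and $-2m\omega\Sc$ would then be more negative on $\screate\vac$ if... ) --- so instead one pins it down by positivity: $H^2 \geq 0$ with minimum eigenvalue $m^2$ on the true vacuum, and $H^2\vac = (m\omega + m\omega\beta + m^2)\vac$ equals $m^2\vac$ precisely when $\beta\vac = -\vac$. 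Wait --- this suggests the opposite sign, so the careful bookkeeping of which of $\bann,\sann$ lowers $H^2$ and by how much is exactly where care is needed; I expect the author's proof resolves this by using the explicit lowest-energy bispinor in the Dirac representation, or by the relation $\acomm{H}{\spm}=\sqrt{2m\omega}\,\bpm$ evaluated on $\vac$ to tie the spinor structure to the annihilation conditions.
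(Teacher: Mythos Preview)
Your proposal contains the right ingredients but never assembles them into a complete argument, and the final paragraph dissolves into sign confusion. The paper's proof is a four-line computation using one relation you never invoke: $\beta = \comm{\sann}{\screate}$ (equivalently $\comm{\sann}{\screate} = -2\Sc$, listed among the $\pso(3|2)$ relations). With this in hand there is nothing to argue about:
\[
\beta\vac \;=\; (\sann\screate - \screate\sann)\vac \;=\; \sann\screate\vac \;=\; \bigl(\acomm{\sann}{\screate} - \screate\sann\bigr)\vac \;=\; \vac,
\]
using only $\sann\vac = 0$ and $\acomm{\sann}{\screate} = \Id$. Your ``shorter version'' already computes $\sann\screate\vac = \vac$, which is the penultimate step above, but you stop there and never connect it back to $\beta$ via the commutator.

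Your alternative route via $\comm{\Sc}{\spm}=\pm\spm$ and $\Sc^2=\tfrac14\Id$ can in fact be completed, but you do not carry it through: the clause ``$\sann$ lowers the $\Sc$-eigenvalue and there is nothing below $\vac$'' does not by itself select $-\tfrac12$ over $+\tfrac12$, since $\sann\vac=0$ is the hypothesis, not a consequence. What actually rules out $\Sc\vac = +\tfrac12\vac$ is that it would force $\Sc\,\screate\vac = \tfrac32\,\screate\vac$, contradicting $\Sc^2=\tfrac14\Id$ (and $\screate\vac\neq 0$ since $\sann\screate\vac=\vac$). You never make this step explicit, and instead retreat to positivity of $H^2$, the explicit Dirac representation, or the relation $\acomm{H}{\spm}=\sqrt{2m\omega}\,\bpm$ --- none of which the paper uses and all of which are detours around the one-line identity $\beta = \sann\screate - \screate\sann$.
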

\begin{proof}
	\begin{align*}
		\beta \vac
		&= \comm{\sann}{\screate}\vac\\
		&= \sann\screate\vac - \screate\sann\vac\\
		&= -\screate\sann\vac + \acomm{\sann}{\screate}\vac - \screate\sann\vac\\
		&= \vac \qedhere
	\end{align*}
\end{proof}

\begin{lem}\label{lem:H0vac}
	\( \Hvac\vac = 0 \)
\end{lem}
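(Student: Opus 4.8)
The plan is to simply substitute the explicit expression for \( \Hvac \) from the previous subsection and apply it to \( \vac \). Recall that \( \Hvac = \sqrt{2m\omega}(\bcreate\sann + \bann\screate) \), so
\[
	\Hvac\vac = \sqrt{2m\omega}\,\bcreate\sann\vac + \sqrt{2m\omega}\,\bann\screate\vac .
\]
The first term vanishes immediately because \( \sann\vac = 0 \) by definition of the vacuum state. For the second term I would use the fact, established just above, that the boson and fermion ladder operators mutually commute — in particular \( \comm{\bann}{\screate} = 0 \) — to rewrite \( \bann\screate\vac = \screate\bann\vac \), which vanishes since \( \bann\vac = 0 \). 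Hence \( \Hvac\vac = 0 \).

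There is essentially no obstacle here: the only inputs are the defining conditions \( \bann\vac = \sann\vac = 0 \) and the commutativity \( \comm{\bann}{\screate} = 0 \) of the two pairs of ladder operators, both already recorded in the text. (Alternatively, one could argue purely algebraically within \( \pso(3|2) \): since \( \Hvac \) is \( 01 \)-graded while \( \vac \) is annihilated by the two lowering operators \( \bann, \sann \), the vector \( \Hvac\vac \) would sit at an energy strictly below that of \( \vac \) under \( \Hsch \), which is impossible — but the direct computation above is the cleanest route and is what I would present.)
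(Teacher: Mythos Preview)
Your proof is correct and essentially identical to the paper's. The only cosmetic difference is that the paper first rewrites \( \Hvac = \sqrt{m\omega/2}\,(\acomm{\bcreate}{\sann} + \acomm{\bann}{\screate}) \) (the \( \pso(3|2) \) form) and then uses \( \comm{\bcreate}{\sann} = \comm{\bann}{\screate} = 0 \) to reduce to \( \sqrt{2m\omega}(\bcreate\sann + \screate\bann) \), whereas you start directly from \( \Hvac = \sqrt{2m\omega}(\bcreate\sann + \bann\screate) \) and commute in the second term; both routes invoke exactly the same facts.
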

\begin{proof}
	Observe that \( \Hvac = \sqrt{m\omega/2}(\acomm{\bcreate}{\sann} + \acomm{\bann}{\screate}) \).
	Since \( \comm{\bcreate}{\sann} = \comm{\bann}{\screate} = 0 \), we have that
	\( \Hvac = \sqrt{2m\omega}(\bcreate\sann + \screate\bann) \).
	Therefore, \( \Hvac\vac = 0 \).
\end{proof}
Notice that the above proofs used both relations from \( \bfa(1|1) \) and \( \pso(3|2) \).

Combining the above two lemmas, we find that
\[
	H\vac = \Hvac\vac + m\beta\vac = m\vac.
\]


Now, consider an eigenstate \( \ket{\psi} \) with eigenvalue \( E \).
Then
\begin{align}\label{eq:1+1raise}
	\ket{\psi^{\pm}} 
	&= (E\pm \sqrt{E^2 + 2m\omega})\bcreate\ket{\psi} + \sqrt{2m\omega}\,\screate\ket{\psi}
\end{align}
is a (normalisable) eigenvector of \( H \) with eigenvalue \( \pm\sqrt{E^2 + 2m\omega} \).
This is easily verified using the relations of \( H \) with \( \bcreate \) and \( \screate \).
We conclude that the spectrum of \( H \) is given by
\[
	\{m\}\cup\{\pm\sqrt{m^2 + 2nm\omega} \mid n~\text{a positive integer}\}
\]
(compare with the spectrum in~\cite{NogamiToyama1996} for example).

%

\section{Dirac Oscillator in (1+2)-dimensions}\label{sec:1+2}
The Dirac Oscillator Hamiltonian in \( (1+2) \)-dimensions is as follows
\begin{equation}\label{eq:1+2Hamiltonian}
	H = \alpha_1(-i\partial_x - \beta im\omega x) + \alpha_2(-i\partial_y - \beta im\omega y) + \beta m
\end{equation}
where \( \alpha_1,\,\alpha_2 \) and \( \beta \) are gamma matrices (elements of the Clifford Algebra \( \Cl{3}(\CC) \)) satisfying
\[
	\acomm{\gamma_j}{\gamma_k} = \delta_{jk}
	\qquad
	\text{for}~(\gamma_1,\gamma_2,\gamma_3) = (\alpha_1,\alpha_2,\beta).
\]

The square of this Hamiltonian is
\begin{align*}
	H^2 &= -\laplacian + m^2\omega^2r^2 + m^2 - 2m\omega\beta + (2m\omega x\partial_y - 2m\omega y\partial_x)\beta\alpha_1\alpha_2
\end{align*}
where \( \laplacian = (\partial_x)^2 + (\partial_y)^2 \) and \( r^2 = x^2 + y^2 \).

In \( (1+2) \)-dimensions, angular momentum appears (both orbital and intrinsic).
We shall treat \( S_0 \coloneq -(1/2)i\alpha_1\alpha_2 \) as the total spin.
Indeed, in the Dirac representation, \( S_0 \) becomes
\[
	\frac{1}{2}
	\begin{pmatrix}
		\sigma_z & 0\\
		0 & \sigma_z
	\end{pmatrix}.
\]
(Recall that, in \( (1+2) \)-dimensions, only the \( z \)-direction contributes to the spin.)

The orbital angular momentum \( L_0 = x(-i\partial_y) - y(-i\partial_x) \) and spin \( S_0 = (-1/2)i\alpha_1\alpha_2 \) both commute with \( H^2 \).
Neither \( L_0 \) nor \( S_0 \) commute with \( H \),
but the total angular momentum \( J = L_0 + S_0 \) does: \( \comm{H}{J} = 0 \).
In addition, the operator \( \beta S_0 \) commutes with \( H \).

One of the main distinctions between the Dirac oscillator in \( (1+2) \)-dimensions, compared to \( (1+1) \)-dimensions, is the appearance of a spin-orbit coupling term in \( H^2 \)
\[
	\CLS = -2\beta L_0 S_0 - (1/2)\beta = - \frac{1}{2}\beta + (x\partial_y - y\partial_x)\beta\alpha_1\alpha_2.
\]
This specific spin-orbit coupling \( \CLS \) was chosen for the fact that \( \comm{H}{\CLS}=0 \).
Hence \( H \) and \( \CLS \) are simultaneously diagonalisable.

\subsection{Ladder operators}
Set \( (x_1,x_2) = (x,y) \).
Similar to \( (1+1) \)-dimensions, define the following operators
\begin{align*}
	\sann_j &= -\frac{i}{2}(\beta\alpha_j + \alpha_j), &
	\screate_j &= -\frac{i}{2}(\beta\alpha_j - \alpha_j),\\
	\bann_j &= -\sqrt{\frac{m\omega}{2}}x_j - \sqrt{\frac{1}{2m\omega}}\partial_{x_j}, &
	\bcreate_j &= -\sqrt{\frac{m\omega}{2}}x_j + \sqrt{\frac{1}{2m\omega}}\partial_{x_j}
\end{align*}
for \( j=1,2 \).
Observe that the corresponding number operator
\[
	N = 
	\sum_{k=1}^2 \left(\bcreate_k\bann_k + \frac{1}{2}\screate_k\sann_k\right)\\
\]
commutes with \( H \).
That is, \( H \) and \( N \) are simultaneously diagonalisable,
so we should consider the Fock space of \( N \) when solving this Dirac equation.
In fact, \( \sann_j, \screate_j \) are ladder operators for \( H^2 \), with
\[
	\comm{H^2}{\sann_j} = -4m\omega\, \sann_j, \qquad
	\comm{H^2}{\screate_j} = 4m\omega\, \screate_j
\]
However, unlike in \( (1+1) \)-dimensions,
the operators \( \bann_j,\bcreate_j \) do \emph{not} form ladder operators for the \( (1+2) \)-dimensional equation.
Instead, consider the following operators:
\begin{align*}
	\aann_j &= \frac{1}{\sqrt{2}}(\bann_j + \beta \alpha_j \alpha_{j+1} \bann_{j+1}) &
	\acreate_j &= \frac{1}{\sqrt{2}}(\bcreate_j - \beta \alpha_j\alpha_{j+1} \bcreate_{j+1})\\
	\cann_j &= \frac{1}{\sqrt{2}}(\bann_j - \beta \alpha_j \alpha_{j+1} \bann_{j+1}) &
	\ccreate_j &= \frac{1}{\sqrt{2}}(\bcreate_j + \beta \alpha_j\alpha_{j+1} \bcreate_{j+1})
\end{align*}
for \( j=1,2 \) (where, by convention, \( 2+1 = 1 \)).

\begin{rmk}
	The operators \( \apm_j \) and \( \cpm_j \) (\( j=1,2 \))
	are similar to the chiral operators from~\cite{BMDS2007b}.
\end{rmk}

The four operators \( \apm_j \) (\( j=1,2 \))
are ladder operators for both \( H^2 \), \( N \) and \( \CLS \):
\begin{align*}
	\comm{H^2}{\apm_j} &= \pm4m\omega\,\apm_j &
	\comm{N}{\apm_j} &= \pm\apm_j &
	\comm{\CLS}{\apm_j} &= \pm\apm_j
\end{align*}
Meanwhile, the four operators \( \cann_j,\ccreate_j \) (\( j=1,2 \)) are ladder operators for \( N \) and \( \CLS \) but commute with \( H \):
\begin{align*}
	\comm{H}{\cpm_j} &= 0 &
	\comm{N}{\cpm_j} &= \pm\cpm_j &
	\comm{\CLS}{\cpm_j} &= \mp\cpm_j 
\end{align*}
The operators \( \cpm_j\) lead to infinite degeneracy.
For each \( j=1,2 \), the three pairs of operators \( \apm_j \), \( \cpm_j \) and \( \spm_j \) mutually commute.

Thus, number operator \( N \) is also the number operator for \( \apm_j,\cpm_j \) and \( \spm_j \).
Indeed,
\[
	N = \sum_{k=1}^2 \left(\frac{1}{2}\acreate_k\aann_k + \frac{1}{2}\ccreate_k\cann_k + \frac{1}{2}\screate_k\sann_k\right).
\]
Thus, we expect that the Fock space in terms of \( \bpm_j,\spm_j \) to be identical to the Fock space in terms of \( \apm_j,\cpm_j,\spm_j \).



Similar to relations~\eqref{eq:1+1interleave}, we have the following relations between the above ladder operators and \( H \) itself:
\begin{align}
	\comm{H}{\apm_j} &= \pm\sqrt{4m\omega}\,\spm_j, &
	\acomm{H}{\spm_j} &= \sqrt{4m\omega}\,\apm_j.
	\label{eq:1+2interleave}
\end{align}

%


\subsubsection{Neither quantum statistics nor parastatistics}\label{sec:1+2notparastatics}
The operators \( \bpm_j \) satisfy the boson relations~\eqref{eq:fermionboson}.
However, despite the following relations
\begin{align*}
	\acomm{\sann_j}{\screate_j} &= \Id, &
	\comm{\aann_j}{\acreate_j} &= \Id, &
	\comm{\cann_j}{\ccreate_j} &= \Id
\end{align*}
the operators \( \spm_j, \apm_j, \cpm_j  \) do not satisfy the fermion nor boson relations~\eqref{eq:fermionboson}
because of the following relations
\[
	\acomm{\sann_1}{\screate_2} = 
	\comm{\aann_1}{\acreate_2} = 
	-\comm{\cann_1}{\ccreate_2} = -2\beta S_0
\]
which should all be zero for fermions\slash bosons.

Moreover, the operators \( \spm_j, \apm_j, \cpm_j \) do not satisfy the parafermion nor paraboson relations of~\eqref{eq:parafermion} and~\eqref{eq:paraboson} because the relations
\begin{align*}
		\comm{\screate_j}{\sann_j} &= -\beta, \\
		\acomm{\acreate_j}{\aann_j} &= N + \CLS + \beta + \frac{1}{2}\Id,\\
		\acomm{\ccreate_j}{\cann_j} &= N - \CLS + \frac{1}{2}\Id
\end{align*}
do not depend on the subscript \( j \)
and hence cannot satisfy the required relations \( \comm*{\comm*{f^+_j}{f^-_j}}{f^-_k} = -2\delta_{jk}f^-_{k} \) (for parafermions)
nor \( \comm*{\acomm*{b^+_j}{b^-_j}}{b^-_k} = -2\delta_{jk}b^-_k \) (for parabosons).
Instead, 
these operators satisfy
\begin{align}\label{eq:notparasac}
		\comm{\comm{\screate_j}{\sann_j}}{\sann_k} &= -2\sann_k, &
		\comm{\acomm{\acreate_j}{\aann_j}}{\aann_k} &= -2\aann_k, &
		\comm{\acomm{\ccreate_j}{\cann_j}}{\cann_k} &= -2\cann_k.
\end{align}
More generally,
the \( \spm_j \) operators satisfy relations~\eqref{eq:notparafermion},
whereas the \( \apm_j \) operators and the \( \cpm_j \) operators satisfy relations~\eqref{eq:notparaboson}.

\subsection{The colour Lie superalgebra}
Observe that we can write
\begin{equation}\label{eq:1+2H2comm}
	H^2 = 2m\omega\acomm{\aann_1}{\acreate_1} - 2m\omega\comm{\sann_1}{\screate_1} + m^2\Id
\end{equation}
using only \( j=1 \) ladder operators. (Note that we could equivalently choose to use only \( j=2 \) ladder operators.)

Observe that
\begin{align*}
	\acomm{\sann_1}{\screate_1} &= \Id, &
	\comm{\aann_1}{\acreate_1} &= \comm{\cann_1}{\ccreate_1} = \Id.
\end{align*}

If we restrict to \( j=1 \)
then the pair
\( \sann_1,\screate_1 \) form ordinary fermion creation\slash annihilation operators,
and
\( \aann_1,\acreate_1 \) and \( \cann_1,\ccreate_1 \) form two pairs of ordinary boson creation\slash annihilation operators.
Moreover, each pair mutually commutes.
This yields the boson-fermion Lie superalgebra with the following basis:
\begin{align*}
	0\text{-sector: }& \Id, \aann_1, \acreate_1, \cann_1, \ccreate_1, &
	1\text{-sector: }& \sann_1, \screate_1.
\end{align*}
We will denote this Lie superalgebra by \( \bfa(2|1) \).
We could also enlarge the above Lie superalgebra to include \( \Id, N,\,\CLS \) and \( \beta \) in the \( 0 \)-sector
(and hence also include \( H^2 = 2m\omega(N + \CLS) + (m\omega + m^2)\Id \)) and include \( H \) in the \( 1 \)-sector.

Recall that bosons and fermions automatically satisfy the paraboson~\eqref{eq:paraboson} and parafermion~\eqref{eq:parafermion} equations.
We can also verify that \( \aann_1,\acreate_1,\cann_1,\ccreate_1 \) and \( \sann_1,\screate_1 \) satisfy the relative parabosons relations~\eqref{eq:relativeparaboson}.
The ladder operators \( \sann_1,\screate_1,\aann_1,\acreate_1,\cann_1,\ccreate_1 \) satisfying \cref{eq:paraboson,eq:parafermion,eq:relativeparaboson} generate the \( \Ztzt \)-graded colour Lie superalgebra%
\footnote{We are using the \( \pso \) notation from~\cite{SVdJ2018}. In the notation of~\cite{Tolstoy2014b}, \( \pso(3|4) \) is written as \( \osp(1,2|4,0) \).}
\( \pso(3|4) \)~\cite{Tolstoy2014b}.
In fact, the full set of ladder operators generate the algebra \( \pso(3|4)\oplus \pso(3|4) \).

Indeed, set
\begin{align*}
	\Hsch&= -\frac{1}{2m}\laplacian  + \frac{1}{2}m\omega^2r^2 = N + \frac{1}{2}\beta + \frac{1}{2}, &
		\spm[\zeta]_\pm &= \frac{1}{2}\screate_1 \mp \zeta\frac{i}{2}\screate_2,\\
		H_\pm &= \frac{1}{2}\Hsch \pm\Hsch\beta S_0, &
		\apm[\zeta]_\pm &= \frac{1}{2}\acreate_1 \mp \zeta\frac{i}{2}\acreate_2,\\
		{\CLS}_\pm &= \frac{1}{2}\CLS +\frac{1}{4}\beta \mp \frac{1}{2}L_0, &
		\cpm[\zeta]_\pm &= \frac{1}{2}\ccreate_1 \pm \zeta\frac{i}{2}\ccreate_2,\\
		{\Sc}_\pm &= -\frac{1}{4}\beta \mp \frac{1}{2}S_0
\end{align*}
where \( \zeta\in\{+,-\} \) is interpreted as \( +1 \) and \( -1 \) in the above equations.
Let \( \pso_\pm(3|4) \) denote the algebra spanned by
\begin{align*}
	00\text{-sector: }& H_\pm, {\CLS}_\pm, {\Sc}_\pm,  & 01\text{-sector: }& \aann_\pm\sann_\pm, \acreate_\pm\screate_\pm, \aann_\pm\screate_\pm, \acreate_\pm\sann_\pm,\\
			  &(\aann_\pm)^2, (\acreate_\pm)^2, (\cann_\pm)^2, (\ccreate_\pm)^2,&&\cann_\pm\sann_\pm, \ccreate_\pm\screate_\pm, \cann_\pm\screate_\pm, \ccreate_\pm\sann_\pm, \\
			  &\aann_\pm\cann_\pm, \acreate_\pm\ccreate_\pm, \aann_\pm\ccreate_\pm, \acreate_\pm\cann_\pm,&&\\
	10\text{-sector: }& \aann_\pm, \acreate_\pm, \cann_\pm, \ccreate_\pm, & 11\text{-sector: }& \sann_\pm, \screate_\pm.
\end{align*}
These colour Lie superalgebras satisfy the following \( \pso(3|4) \) relations:
\begin{align*}
	\comm{H_\pm}{C_\pm} &= 0, &
	\comm{H_\pm}{{\Sc}_\pm} &= \comm{{\CLS}_\pm}{{\Sc}_\pm} = 0,\\
	\comm{H_\pm}{\apm[\zeta]_\pm} &= \zeta\omega\apm[\zeta]_\pm, &
	\comm{{\CLS}_\pm}{\apm[\zeta]_\pm} &= \zeta \apm[\zeta]_\pm, \\
	\comm{H_\pm}{\cpm[\zeta]_\pm} &= \zeta\omega\cpm[\zeta]_\pm, &
	\comm{{\CLS}_\pm}{\cpm[\zeta]_\pm} &= -\zeta\cpm[\zeta]_\pm, \\
	\comm{{\CLS}_\pm}{\spm[\zeta]_\pm} &= 0, &
	\comm{{\Sc}_\pm}{\spm[\zeta]_\pm} &= \zeta\spm[\zeta]_\pm,\\
	\comm{{\Sc}_\pm}{\apm[\zeta]_\pm} &= 
	\comm{{\Sc}_\pm}{\cpm_\pm} = 0, &
	\comm{H_\pm}{\spm[\zeta]_\pm} &= 0,\\
	\acomm{\aann_\pm}{\acreate_\pm} &= \frac{1}{\omega} H_\pm + {\CLS}_\pm, &
	\acomm{\cann_\pm}{\ccreate_\pm} &= \frac{1}{\omega} H_\pm - {\CLS}_\pm, \\
	\comm{\sann_\pm}{\screate_\pm} &= -2{\Sc}_\pm,&
	\acomm{\aann_\pm}{\cann_\pm} &= 2\aann_\pm\cann_\pm,\,\text{etc.}\\
	\acomm{\aann_\pm}{\sann_\pm} &= 2\aann_\pm\sann_\pm,\,\text{etc.} &
	\acomm{\cann_\pm}{\sann_\pm} &= 2\cann_\pm\sann_\pm,\,\text{etc.}\\
	\comm{\apm[\eta]_\pm\spm[\zeta]_\pm}{\apm[\xi]_\pm} &= \frac{1}{2}(\xi - \eta) \spm[\zeta]_\pm, &
	\acomm{\apm[\eta]_\pm\spm[\zeta]_\pm}{\spm[\xi]_\pm} &= \frac{1}{2}\abs{\xi - \zeta}\apm[\eta]_\pm,\\
	\comm{\cpm[\eta]_\pm\spm[\zeta]_\pm}{\cpm[\xi]_\pm} &= \frac{1}{2}(\xi - \eta) \spm[\zeta]_\pm, &
	\acomm{\cpm[\eta]_\pm\spm[\zeta]_\pm}{\spm[\xi]_\pm} &= \frac{1}{2}\abs{\xi - \zeta}\cpm[\eta]_\pm,\\
	\comm{\apm[\eta]_\pm\spm[\zeta]_\pm}{\cpm[\xi]_\pm} &= 0, &
	\comm{\cpm[\eta]_\pm\spm[\zeta]_\pm}{\apm[\xi]_\pm} &= 0
\end{align*}
where \( \zeta,\eta,\xi\in\{+,-\} \) are interpreted as \( +1 \) and \( -1 \) in the above equations.
The remaining relations can be obtained via the colour Jacobi identity.

The ladder operators \( \spm[\zeta]_j,\,  \apm[\zeta]_j,\, \cpm[\zeta]_j \) (\( j=1,2 \))
generate the algebra \( \pso_+(3|4) \oplus \pso_-(3|4) \).
Note that the Hamiltonian
\[
	H = \sqrt{4m\omega}(\acreate_1\sann_1 + \aann_1\screate_1) - 2m\Sc
	 = \sqrt{4m\omega}(\acreate_2\sann_2 + \aann_2\screate_2) - 2m\Sc
\]
is an element of this algebra, but is not homogeneous.
We can include the Hamiltonian as a homogeneous element using similar approaches described for \( (1+1) \)-dimensions in \cref{sec:makeHhomogeneous}.
In addition, \( H^2 = 2m\Hsch + 2m\omega\CLS + 2m\omega\Sc + m^2\Id \) is an element of \( \pso_+(3|4)\oplus\pso_-(3|4)\oplus\CC\Id \).



\begin{rmk}

	Since \( (\beta S_0)^2 = \Id/4 \), the eigenvalues of \( \beta S_0\) must be \( \pm (1/2) \).
	Thus, we can decompose the Hilbert space \( \Hilb \) as eigenspaces of \( \beta S_0 \):
	\[
		\Hilb = \Hilb_+ \oplus \Hilb_-
	\]
	so that \( \beta S_0\ket{\psi} = \pm(1/2)\ket{\psi} \) for \( \ket{\psi}\in\Hilb_{\pm} \).
	Notice that
	\begin{align*}
		\spm[\zeta]_\pm &= \spm[\zeta]_1\left(\frac{1}{2}\Id \pm \beta S_0\right), &
		\apm[\zeta]_\pm &= \apm[\zeta]_1\left(\frac{1}{2}\Id \pm \beta S_0\right), &
		\cpm[\zeta]_\pm &= \cpm[\zeta]_1\left(\frac{1}{2}\Id \pm \beta S_0\right)
	\end{align*}
	and that \( (1/2)\Id \pm \beta S_0 \) is the projection operator onto \( \Hilb_\pm \).
	Therefore, the action of \( \pso_\pm(3|4) \) vanishes on \( \Hilb_\mp \).
	Thus, we only need to consider the action of \( \pso_\pm(3|4) \) on \( \Hilb_\pm \).
	In this way, the structure of the colour Lie superalgebra mirrors that of the Hilbert space.
	The Fock space within \( \Hilb_+ \) or \( \Hilb_- \) is then constructed similarly to \( (1+1) \)-dimensions.
\end{rmk}

\section{Dirac oscillator in (1+3)-dimensions}\label{sec:1+3}
The Dirac Oscillator Hamiltonian in \( (1+3) \)-dimensions is as follows
\begin{equation}\label{eq:1+3Hamiltonian}
	H = \alpha_1(-i\partial_x - \beta im\omega x) + \alpha_2(-i\partial_y - \beta im\omega y) + \alpha_3(-i\partial_z - \beta im\omega z) + \beta m
\end{equation}
where \( \alpha_1,\,\alpha_2,\,\alpha_3 \) and \( \beta \) are gamma matrices (elements of the Clifford Algebra \( \Cl{4}(\CC) \)) satisfying
\[
	\acomm{\gamma_j}{\gamma_k} = \delta_{jk}
	\qquad
	\text{for}~(\gamma_1,\gamma_2,\gamma_3,\gamma_4) = (\alpha_1,\alpha_2,\alpha_3,\beta).
\]

In \( (1+3) \)-dimensions we must pay more careful attention to angular momentum.
Let \( \vec{L} = \vec{r}\times\vec{p} = (x,y,z)\times(-i\partial_x,-i\partial_y,-i\partial_z) \) be the orbital angular momentum 
and \( \vec{S} = -\frac{i}{4} \vec{\alpha}\times\vec{\alpha} \) be the spin.
In particular,
\[
	S_1 = -\frac{i}{2}\alpha_2\alpha_3, \qquad S_2 = -\frac{i}{2}\alpha_3\alpha_1, \qquad S_3=-\frac{i}{2}\alpha_1\alpha_2.
\]
In the Dirac representation, these become the standard spin matrices (acting on bispinors):
\[
	S_i = 
	\frac{1}{2}
	\begin{pmatrix}
		\sigma_i & 0\\
		0 & \sigma_i
	\end{pmatrix}
	.
\]

As in the \( (1+2) \)-dimensional case,
\( \vec{L} \) and \( \vec{S} \) are not conserved
but the total angular momentum \( \vec{J} = \vec{L} + \vec{S} \) is, i.e.\
\[
	\comm{H}{J_i} = 0
\]
for \( i=1,2,3 \).

The Hamiltonian squares to
\begin{align*}
	H^2 = -\vec{p}\vdot\vec{p} + m^2\omega^2\vec{r}\vdot\vec{r} - 2m\omega\beta \vec{L}\vdot\vec{S} -3m\omega\beta + m^2\Id.
\end{align*}
Notice the spin-orbit coupling term in \( H^2 \),
\[
	\CLS \coloneqq -2\beta\vec{L}\vdot\vec{S} - \beta,
\]
commutes with \( H \).
This spin-orbit coupling term makes the analysis in \( (1+3) \)-dimensions more complicated.

As in \( (1+1) \)- and \( (1+2) \)-dimensions, define
\begin{align*}
	\sann_j &= -\frac{i}{2}(\beta\alpha_j + \alpha_j), &
	\screate_j &= -\frac{i}{2}(\beta\alpha_j - \alpha_j),\\
	\bann_j &= -\sqrt{\frac{m\omega}{2}}x_j - \sqrt{\frac{1}{2m\omega}}\partial_{x_j}, &
	\bcreate_j &= -\sqrt{\frac{m\omega}{2}}x_j + \sqrt{\frac{1}{2m\omega}}\partial_{x_j}
\end{align*}
for \( j=1,2,3 \)
(recalling that \( (x_1,x_2,x_3) = (x,y,z) \))
and let \( \Bann, \Bcreate, \Sann, \Screate \) be the corresponding vector operators.
Contrasting with lower dimensions, none of these twelve operators form ladder operators for \( H^2 \).
However, the corresponding number operator
\[
	N = 
	\Bcreate\vdot\Bann + \frac{1}{3}\Screate\vdot\Sann
\]
commutes with \( H \).
This implies that the eigenstates for \( H \) lie in the corresponding Fock space.
Furthermore,
\begin{equation}\label{eq:H2NLS}
	H^2 = 2m\omega(N + \CLS + \Id) + m^2\Id.
\end{equation}

The following operators commute with \( H \):
\begin{align}
	N &= \Bcreate\vdot\Bann + \frac{1}{3}\Screate\vdot\Sann\label{eq:1+3N},\\
	\CLS &= -2\beta\vec{L}\vdot\vec{S} - \beta,\\
	\vec{S}^2 &= \vec{S}\vdot\vec{S} = \frac{3}{4}\Id,\\
	\vec{J}^2 &= \vec{J}\vdot\vec{J}.
\end{align}
In addition, the following operators commute with \( H^2 \) (but not with \( H \)):
\begin{align}
	\vec{L}^2 &= \vec{L}\vdot\vec{L},\\
	\beta.
	\label{eq:1+3Sc}
\end{align}
Moreover, the operators~\eqref{eq:1+3N}--\eqref{eq:1+3Sc} all mutually commute.

\begin{rmk}\label{rmk:1+3notparastatistics}
	Just as in \( (1+2) \)-dimensions, the \( \spm_k \) operators are neither fermion nor parafermion operators,
	and instead satisfy relations~\eqref{eq:notparafermion}.
\end{rmk}


\subsection{The colour Lie superalgebra}

Consider the operators 
\begin{align*}
	\bannspin=2\Bann\vdot\vec{S},\qquad
	\bcreatespin=2\Bcreate\vdot\vec{S},\qquad
	\sannspin=\frac{2}{3}\Sann\vdot\vec{S},\qquad
	\screatespin=\frac{2}{3}\Screate\vdot\vec{S}.
\end{align*}
Observe that they satisfy the following commutation relations:
\begin{align*}
	\comm{N}{\bannspin} &= -\bannspin, &
	\comm{N}{\bcreatespin} &= \bcreatespin, &
	\comm{N}{\sannspin} &= -\sannspin, &
	\comm{N}{\screatespin} &= \screatespin, \\
	\comm{\beta}{\bannspin} &= 0, &
	\comm{\beta}{\bcreatespin} &= 0, &
	\acomm{\beta}{\sannspin} &= 0, &
	\acomm{\beta}{\screatespin} &= 0, \\
					    &&&&
	\comm{\vec{L}^2}{\sannspin} &= 0, &
	\comm{\vec{L}^2}{\screatespin} &= 0
\end{align*}
and all commute with \( \vec{J}^2 \) and all anticommute with \( \CLS \).
Moreover,
we have \( H = \sqrt{2m\omega}(\bcreatespin\sannspin + \bannspin\screatespin) - 2m\Sc \).

Using the above relations, these ladder operators send a simultaneous eigenstate of \( N \) and \( \CLS \) to another simultaneous eigenstate of \( N \) and \( \CLS \).
Therefore, \( \bpmspin \) and \( \spmspin \) send \( H^2 \)-eigenspaces to \( H^2 \)-eigenspaces.

The ladder operators \( \spmspin \) satisfy the fermion relations
but the ladder operators \( \bpmspin \) do \emph{not} satisfy the boson relations.
Instead, \( \bpmspin \) are paraboson operators, satisfying~\eqref{eq:paraboson}.
The two series of ladder operators \( \spmspin \) and \( \bpmspin \) commute,
but do not satisfy the relative parafermion~\eqref{eq:relativeparafermion}, nor the relative paraboson~\eqref{eq:relativeparaboson} relations.

Set \( \Sc = -(1/2)\beta \).
To realise both the parafermion relations~\eqref{eq:parafermion} for \( \spmspin \) (recalling that fermion operators satisfy the parafermion relations) and the paraboson relations~\eqref{eq:paraboson} for \( \bpmspin \),
we can use the Lie superalgebra with the following basis:
\begin{align*}
	0\text{-sector: }& N+\Id, \Sc, \sannspin,\screatespin, (\bannspin)^2,(\bcreatespin)^2, &
	1\text{-sector: }& \bannspin,\bcreatespin.
\end{align*}
This Lie superalgebra is isomorphic to \( \osp(1|2)\oplus\spl(2) \).

To instead realise the fermion relations~\eqref{eq:fermionboson} for \( \spmspin \) and the paraboson relations~\eqref{eq:paraboson} for \( \bpmspin \)
requires a \( \Ztzt \)-graded colour Lie superalgebra spanned by the following basis:
\begin{align*}
	00\text{-sector:}&\, N-\Sc, \Id, (\bannspin)^2, (\bcreatespin)^2, &
	01\text{-sector:}&\, \bannspin,\,\bcreatespin, \\
	10\text{-sector:}&\, \sannspin,\, \screatespin, &
	11\text{-sector:}&\, \text{zero}.
\end{align*}
Note that any Lie superalgebra can be given a \( \Ztzt \)-grading
by choosing an embedding of \( \Ztwo \) into \( \Ztzt \)
(doing so will make two of the \( \Ztzt \)-graded sectors empty).
With this in mind,
the above colour Lie superalgebra is isomorphic to
\[
	\osp_{01}(1|2) \oplus \spl_{10}(1|1)
\]
where \( \osp_{01} \) is the orthosymplectic Lie superalgebra with \( \Ztwo = \{00,01\} \)-grading and
\( \spl_{10} \) is the special linear Lie superalgebra with \( \Ztwo = \{00,10\} \)-grading.
Note that, despite the colour Lie superalgebra being a direct sum of Lie superalgebras,
it is not a Lie superalgebra itself due to gradings of the different direct summands having different embeddings into \( \Ztzt \).

The advantage of considering the colour Lie superalgebra is that we can include \( H^2 \) as an element.
To do so, we enlarge the algebra by adding \( \Sc \) to the \( 00 \)-sector and \( \CLS \) to the \( 11 \)-sector,
forming a colour Lie superalgebra isomorphic to
\[
	\osp_{01}(1|2) \oplus \gl_{10}(1|1) \oplus \labelian_{11}
\]
(where \( \labelian_{11} \) is a one dimensional abelian Lie algebra with a \( 11 \)-grading).
Then, \( H^2 = 2m\omega(N + \CLS + \Id) + m^2\Id \) is an element of this enlarged algebra.
This element \( H^2 \) is not homogeneous,
but satisfies the following relations
\begin{align*}
	\cbrak{H^2}{\bannspin} &= -2m\,\bannspin, &
	\cbrak{H^2}{\bcreatespin} &= 2m\,\bcreatespin, \\
	\cbrak{H^2}{\sannspin} &= -2m\,\sannspin, &
	\cbrak{H^2}{\screatespin} &= 2m\,\screatespin
\end{align*}
where \( \cbrak{\cdot}{\cdot} \) is the colour bracket (realised as either a commutator or anticommutator on homogeneous elements).
Compare this with the corresponding relations for the \( (1+1) \)-dimensional equation~\eqref{eq:1+1H2ladder}.
In fact,
setting \( \underline{H}^2 = 2m\omega(N-\CLS + \Id) + m^2\Id \)
we find that
\begin{equation}\label{eq:braidbracket}
	\cbrak{H^2}{\bannspin}
	= H^2\bannspin - \bannspin\underline{H}^2
\end{equation}
and similarly for \( \bcreatespin,\,\sannspin,\,\screatespin \)
(notice that the coefficient of \( \CLS \) in \( \underline{H}^2 \) is \( -1 \)
whereas the coefficient of \( \CLS \) in \( H^2 \) is \( +1 \)---c.f.~\eqref{eq:H2NLS}).

\begin{rmk}
	We cannot capture the behaviour of the ladder operators with \( N \) and \( \CLS \) with a finite-dimensional Lie superalgebra.
	We must use a colour Lie superalgebra.
	Indeed, to realise \( \acomm{\CLS}{\bannspin} = \acomm{\CLS}{\bcreatespin} = \acomm{\CLS}{\sannspin} = \acomm{\CLS}{\screatespin} = 0 \),
	we would need \( \CLS,\,\bannspin,\,\bcreatespin,\,\sannspin,\,\screatespin \) all lying in the odd sector.
	But then we can generate an infinite number of linearly independent operators:
	\[
		x_1 = \bannspin,\,
		x_2 = \comm*{\acomm*{\bcreatespin}{\sannspin}}{x_1},\,
		x_3 = \comm*{\acomm*{\bannspin}{\screatespin}}{x_2},\,
		x_4 = \comm*{\acomm*{\bcreatespin}{\sannspin}}{x_3},\,
		x_5 = \comm*{\acomm*{\bannspin}{\screatespin}}{x_4},\,
		\ldots
	\]
\end{rmk}


The colour Lie superalgebra \(\osp_{01}(1|2) \oplus \gl_{10}(1|1) \oplus \labelian_{11}\) satisfies the following relations:
\begin{align*}
	\comm{N}{\bpmspin} &= \pm\bpmspin, &
	\comm{N}{(\bpmspin)^2} &= \pm2(\bpmspin)^2, \\
	\comm{N}{\spmspin} &= \pm\spmspin, &
	\comm{\Sc}{\spmspin} &= \pm\spmspin, \\
	\comm{(\bpmspin[\mp])^2}{\bpmspin} &= \pm2\bpmspin[\mp], &
	\acomm{\sannspin}{\screatespin} &= \Id,\\
	\comm{(\bannspin)^2}{(\bcreatespin)^2} &= \mathrlap{2\acomm{\bannspin}{\bcreatespin} = 4N -4\Sc + 4\Id}
\end{align*}
with all other relations zero.


\subsection{Fock space and the spectrum}\label{sec:1+3Fockspace}
In this section, we demonstrate the utility of the paraboson operators and the colour Lie superalgebra by solving the Dirac oscillator within a Fock space.
Solutions to the Dirac oscillator have been computed elsewhere (e.g.~\cite{BMyRNYSB1990}),
so we omit many of the details and instead draw the reader's attention to the steps in our solution which make use of the colour Lie superalgebra relations.

\begin{prop}
	There exists a vacuum state \( \vac \) such that \( \bann_k\vac = \sann_k\vac = S_-\vac = 0 \) for \( k = 1,2,3 \).
\end{prop}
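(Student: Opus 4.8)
The plan is to exploit the tensor-product structure $\Hilb = L^2(\RR^3)\otimes\CC^4$ of the space of bispinor wavefunctions, under which the relevant operators factorise: each $\bann_k$ acts only on the orbital factor $L^2(\RR^3)$, while $\sann_1,\sann_2,\sann_3$ and $S_-$ act only on the spinor factor $\CC^4$. It therefore suffices to produce a common null vector $g$ of $\bann_1,\bann_2,\bann_3$ in $L^2(\RR^3)$ and a common null vector $v_0$ of $\sann_1,\sann_2,\sann_3,S_-$ in $\CC^4$, and then set $\vac := g\otimes v_0$.

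On the orbital factor this is immediate: the $\bann_k$ are precisely the annihilation operators of the three-dimensional isotropic harmonic oscillator, so $\bigcap_k\ker\bann_k$ is one-dimensional, spanned by the Gaussian $g(\vec r)\propto e^{-m\omega r^2/2}$. On the spinor factor, I would use the identity $\sann_k = -\tfrac{i}{2}\alpha_k(\Id-\beta)$, which follows from $\alpha_k\beta = -\beta\alpha_k$; since each $\alpha_k$ is invertible, $\sann_k v = 0$ for all $k$ holds if and only if $v$ lies in the large-component subspace $\{v : \beta v = v\}$. Crucially, this subspace is two-dimensional and independent of $k$, so the three conditions on $v$ are compatible rather than over-determined. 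Restricted to that subspace $\vec S$ acts irreducibly as spin $\tfrac12$, so $S_- = S_1 - iS_2$ is a nilpotent lowering operator with a one-dimensional kernel (the spin-down line); taking $v_0$ to span it completes the construction. This also shows $\vac$ is unique up to scalar, and one checks in addition that $\beta\vac = \vac$ and $H\vac = m\vac$, paralleling the $(1+1)$-dimensional case.

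A more intrinsic argument, closer to the one used in $(1+1)$ dimensions, is also available. Since $H$ is self-adjoint we have $H^2\ge 0$; and since $N = \Bcreate\vdot\Bann + \tfrac13\Screate\vdot\Sann$ is visibly the sum of a bosonic number operator and a non-negative operator, $N$ is bounded below and attains its minimum. From $\comm{N}{\bann_k} = -\bann_k$ and $\comm{N}{\sann_k} = -\sann_k$ the operators $\bann_k,\sann_k$ strictly lower the $N$-eigenvalue, hence annihilate the minimal $N$-eigenspace; that eigenspace is spanned by the vectors $g\otimes v$ with $\beta v = v$, and since $\comm{N}{S_-} = 0$ it is invariant under $S_-$, inside which we again pick the spin-down vector.

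The only step that needs care is the spinor-space computation: checking that imposing $\sann_k v = 0$ for all three values of $k$ forces $v$ into the large-component subspace, and then that $S_-$ has a nontrivial kernel there. Once the identity $\sann_k = -\tfrac{i}{2}\alpha_k(\Id-\beta)$ is in hand this is routine linear algebra; the one point to watch is the sign and orientation convention defining $S_- = S_1 - iS_2$, which fixes which bispinor is singled out but not whether such a bispinor exists.
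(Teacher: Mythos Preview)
Your proof is correct, but the route differs from the paper's. The paper argues abstractly: the ``standard argument'' produces a state $\vactilde$ with $\bann_k\vactilde=\sann_k\vactilde=0$; if $S_-\vactilde\neq 0$ one sets $\vac=S_-\vactilde$, which works because $(S_-)^2=0$ and because the commutator of $S_-$ with each annihilation operator is itself a linear combination of annihilation operators (so $\bann_k S_-\vactilde = S_-\bann_k\vactilde + (\text{annihilators})\vactilde = 0$, and similarly for $\sann_k$). Your argument instead exploits the tensor-product factorisation $\Hilb=L^2(\RR^3)\otimes\CC^4$ to exhibit $\vac$ explicitly as a Gaussian times a large-component spin-down bispinor, via the identity $\sann_k=-\tfrac{i}{2}\alpha_k(\Id-\beta)$.

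What each buys: the paper's argument is representation-independent and transparently generalises (it only needs $(S_-)^2=0$ and that $S_-$ normalises the span of the annihilation operators), whereas your construction gives more information---uniqueness up to scalar, and the identifications $\beta\vac=\vac$, $H\vac=m\vac$---at the cost of invoking a specific realisation. Your second, ``more intrinsic'' paragraph is close in spirit to the paper's, though the paper's key mechanism is the commutator fact $[S_-,\sann_k]\in\spn\{\sann_j\}$ rather than $[N,S_-]=0$.
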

\begin{proof}
	The standard argument shows that there exists a state \( \vactilde \) with \( \bann_k\vactilde=\sann_k\vactilde = 0 \).
	If  \( S_-\vactilde = 0 \), then we can set \( \vac = \vactilde \) and we are done. Otherwise, we set \( \vac = S_-\vactilde \).
	Since the commutator of the annihilation operators with \( S_- \) is a linear combination of annihilation operators
	and since \( (S_-)^2 = 0 \), we have that \( \vac \) is a such a vacuum state.
\end{proof}

We now consider the \( (\osp_{01}(1|2)\oplus\spl_{10}(1|1)) \)-module generated by \( \{\vac,S_+\vac\} \), which we interpret as the Fock space.
(We could have instead considered the \( (\osp(1|2)\oplus\spl(2)) \)-module,
which would yield the same vector subspace. However, as we will see, the colour algebra module is more useful.)

We will first search for \( \vec{L} \)- and \( \vec{J} \)-modules within the Fock space.
Let \( L_{\pm}=L_1\pm iL_2 \), \( S_{\pm} = S_1 \pm iS_2 \) and \( J_{\pm} = L_{\pm} + S_{\pm} \) be raising\slash lowering operators for orbital angular momentum, spin and total angular momentum, respectfully.

The elements
\[
	(\bcreate_1 + i\bcreate_2)^\ell\vac \qquad \text{and} \qquad (\bcreate_1 + i\bcreate_2)^\ell S_+\vac
\]
are both \( L_3 \)-highest weight vectors with \( L_3 \)-weight \( \ell \).
Thus, the Fock space contains the \( \vec{L} \)-module \( V_\ell\oplus V_\ell \) for each \( \ell = 0,1,2,\ldots \).
As a \( \vec{J} \)-module, this \( \vec{L} \)-module becomes \( V_\ell\otimes V_{1/2} \).
Applying the Clebsch--Gordon decomposition, we find a \( \vec{J} \)-module
\[
	V_{\ell+1/2}\oplus V_{\ell - 1/2}
\]
as a submodule of the Fock space
(if \( \ell=0 \), then the \( \vec{J} \)-module is \( V_{1/2} \)).
The highest weight vectors of \( V_{\ell+1/2} \) and \( V_{\ell-1/2} \) are
\[
	(\bcreate_1 + i\bcreate_2)^\ell S_+\vac
	\qquad \text{and} \qquad
	(\bcreate_1 + i\bcreate_2)^\ell \vac
	+(\bcreate_1 + i\bcreate_2)^{\ell-1}\bcreate_3 S_+\vac
\]
respectively.
Notice that the eigenvalue of \( N \) on these subspaces is \( n = \ell \)
(recall that \( \comm{N}{\vec{J}^2}=0 \)).

We wish to raise the eigenvalue of \( N \),
without changing the eigenvalues of \( \vec{L}^2,\,\vec{J}^2 \) and \( J_3 \).
However, we cannot achieve this with either \( \bcreatespin \) (which does not preserve the eigenvalues of \( \vec{L}^2 \)) nor with \( \screatespin \) (which could only raise the eigenvalue of \( N \) at most once, since \( (\screatespin)^2 = 0 \)).
Instead, we define new operators
\[
	\tpm=(\bpmspin)^2\spmspin[\mp] + \spmspin
\]
(note that \( \tpm \) is a spectral rescaling of \( (1/2\sqrt{2m\omega})\acomm{H_0}{\bpmspin} \)).
Observe that \( (\tpm)^2 = (\bpm)^2 \):
\begin{align*}
	(\tpm)^2 
	&= \frac{1}{2}\acomm{\tpm}{\tpm}\\
	&= \frac{1}{2}(\acomm{(\bpmspin)^2\spmspin[\mp]}{(\bpmspin)^2\spmspin[\mp]}
	+2\acomm{(\bpmspin)^2\spmspin[\mp]}{\spmspin}
	+\acomm{\spmspin}{\spmspin})\\
	&= \acomm{(\bpmspin)^2\spmspin[\mp]}{\spmspin}\\
	&= (\bpmspin)^2\acomm{\spmspin[\mp]}{\spmspin} + \comm{(\bpmspin)^2}{\spmspin}\spmspin\\
	&= (\bpmspin)^2.
\end{align*}
Note that the above calculation can be performed within the universal enveloping algebra for the \( \Ztzt \)-graded colour Lie superalgebra \( \osp_{01}(1|2) \oplus \spl_{10}(1|1) \) (where \( \tpm \) is \( 10 \)-graded)
but \emph{cannot} be performed in the universal enveloping algebra for the Lie superalgebra \( \osp(1|2)\oplus\spl(2) \) (where \( \tpm \) is \( 0 \)-graded).

\begin{prop}
	The operator \( \tcreate \) is injective.
	Moreover, \( \tcreate \) preserves the eigenvalues of \( \vec{L}^2 \), \( \vec{J^2} \) and \( J_3 \).
\end{prop}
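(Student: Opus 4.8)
\emph{Proof proposal.} The plan is to establish the two assertions separately: first that \( \tcreate \) is injective, then that \( \tcreate \) commutes with each of \( \vec{L}^2 \), \( \vec{J}^2 \) and \( J_3 \) (so that it maps every common eigenspace of these three operators into itself, which is the meaning of ``preserves eigenvalues'' here).

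For injectivity I would exploit the operator identity \( (\tcreate)^2 = (\bcreatespin)^2 \) just verified: since \( \ker\tcreate \subseteq \ker(\tcreate)^2 = \ker(\bcreatespin)^2 \), it suffices to show that \( (\bcreatespin)^2 \) has trivial kernel. This I would do by a positivity estimate. Because \( \bann_k = \adjt{(\bcreate_k)} \) and the spin operators \( \vec{S} \) are self-adjoint and commute with the \( \bcreate_k \), one has \( \bannspin = \adjt{(\bcreatespin)} \), hence \( \adjt{\bigl((\bcreatespin)^2\bigr)} = (\bannspin)^2 \). If \( (\bcreatespin)^2 v = 0 \) for some \( v \) in the Fock space, then, expanding \( (\bannspin)^2(\bcreatespin)^2 \) via the colour-algebra relation \( \comm{(\bannspin)^2}{(\bcreatespin)^2} = 4N - 4\Sc + 4\Id \),
\[
	0 = \norm{(\bcreatespin)^2 v}^2 = \langle v , (\bannspin)^2(\bcreatespin)^2 v\rangle = \langle v , (4N - 4\Sc + 4\Id)v\rangle + \norm{(\bannspin)^2 v}^2 .
\]
Now \( \Sc = -\tfrac12\beta \) with \( \beta \) self-adjoint and \( \beta^2 = \Id \), so \( -4\Sc = 2\beta \ge -2\Id \); and \( N = \Bcreate\vdot\Bann + \tfrac13\Screate\vdot\Sann = \sum_k \adjt{(\bann_k)}\bann_k + \tfrac13\sum_k\adjt{(\sann_k)}\sann_k \ge 0 \). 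Hence \( 4N - 4\Sc + 4\Id \ge 2\Id \) is positive definite, and the displayed identity forces \( \norm{v}^2 = 0 \). Thus \( (\bcreatespin)^2 \), and therefore \( \tcreate \), is injective.

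For the angular-momentum claims I would proceed as follows. Each of \( \bcreatespin = 2\Bcreate\vdot\vec{S} \), \( \sannspin = \tfrac23\Sann\vdot\vec{S} \), \( \screatespin = \tfrac23\Screate\vdot\vec{S} \) is a scalar under \( \vec{J} = \vec{L} + \vec{S} \): it is the invariant contraction of two operators each of which transforms as a \( 3 \)-vector under \( \vec{J} \) (\( \Bcreate \) is a vector under \( \vec{L} \) and inert under \( \vec{S} \); \( \Sann \) and \( \Screate \) are vectors under \( \vec{S} \) and inert under \( \vec{L} \); and \( \vec{S} \) is a vector under \( \vec{S} \) and inert under \( \vec{L} \)). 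The usual \( \epsilon \)-symbol cancellation then yields \( \comm{J_i}{\bcreatespin} = \comm{J_i}{\sannspin} = \comm{J_i}{\screatespin} = 0 \) for \( i = 1,2,3 \), so \( \tcreate = (\bcreatespin)^2\sannspin + \screatespin \) commutes with every \( J_i \), in particular with \( \vec{J}^2 \) and \( J_3 \). For \( \vec{L}^2 \), recall that \( \comm{\vec{L}^2}{\sannspin} = \comm{\vec{L}^2}{\screatespin} = 0 \) has already been noted, so only \( \comm{\vec{L}^2}{(\bcreatespin)^2} \) is at issue. The point is that the spin dependence of \( (\bcreatespin)^2 \) drops out: in the Dirac representation \( 2\vec{S} \) acts on each bispinor block as the vector of Pauli matrices \( \sigma_k \), so \( \bcreatespin \) acts blockwise as \( \sum_k \bcreate_k\sigma_k \), and since the \( \bcreate_k \) commute while \( \sigma_k\sigma_l + \sigma_l\sigma_k = 2\delta_{kl}\Id \), the cross terms cancel and \( (\bcreatespin)^2 = \sum_k(\bcreate_k)^2 = \Bcreate\vdot\Bcreate \). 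This last operator is a genuine scalar under \( \vec{L} \), hence commutes with each \( L_i \) and with \( \vec{L}^2 \); therefore \( \comm{\vec{L}^2}{\tcreate} = \comm{\vec{L}^2}{(\bcreatespin)^2}\sannspin + (\bcreatespin)^2\comm{\vec{L}^2}{\sannspin} + \comm{\vec{L}^2}{\screatespin} = 0 \).

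The only step that is not pure bookkeeping is the injectivity argument, and its heart is the reduction \( \ker\tcreate \subseteq \ker(\bcreatespin)^2 \) supplied by \( (\tcreate)^2 = (\bcreatespin)^2 \), which converts injectivity into the positivity of \( 4N - 4\Sc + 4\Id \); the identity \( (\bcreatespin)^2 = \Bcreate\vdot\Bcreate \) (a one-line Pauli-matrix computation) is what makes the \( \vec{L}^2 \)-invariance work, while the \( \vec{J}^2 \)- and \( J_3 \)-statements are immediate once the building blocks of \( \tcreate \) are recognised as rotational scalars.
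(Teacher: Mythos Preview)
Your proposal is correct and follows essentially the same route as the paper: reduce injectivity of \( \tcreate \) to injectivity of \( (\bcreatespin)^2 \) via \( (\tcreate)^2 = (\bcreatespin)^2 \), then establish the latter by the positivity of \( (\bannspin)^2(\bcreatespin)^2 - (\bcreatespin)^2(\bannspin)^2 = 4(N - \Sc + \Id) \); and handle the angular-momentum statements by observing that the building blocks of \( \tcreate \) are \( \vec{J} \)-scalars while \( (\bcreatespin)^2 \) and \( \spmspin \) commute with \( \vec{L}^2 \). The paper sharpens the positivity step by computing \( N - \Sc + \Id = \Bcreate\vdot\Bann + \tfrac{3}{2} \) exactly rather than bounding \( N \) and \( \Sc \) separately, while you go further than the paper on the \( \vec{L}^2 \) side by actually exhibiting \( (\bcreatespin)^2 = \Bcreate\vdot\Bcreate \) via the Pauli identity, which makes the commutation with \( \vec{L}^2 \) transparent where the paper simply invokes ``direct computation''.
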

\begin{proof}
	First, we claim that \( N - \Sc + I \) is injective.
	Indeed, take a non-zero state \( \ket{\psi} \) and consider
	\begin{align*}
		\bra{\psi}(N - \Sc + \Id)\ket{\psi}
			&= \bra{\psi}\left(\Bcreate\vdot\Bann + \frac{3}{2}\right)\ket{\psi}\\
			&= \bra{\psi}\Bcreate\vdot\Bann\ket{\psi} + \frac{3}{2}\braket{\psi}\\
			&>0.
	\end{align*}
	Thus \( N-\Sc+\Id \) sends non-zero states to non-zero states (trivial kernel).

	Similarly,
	we claim that \( (\bcreatespin)^2 \) is injective.
	Indeed, take a non-zero state \( \ket{\psi} \) and observe that
	\[
		\bra{\psi}(\bannspin)^2(\bcreatespin)^2\ket{\psi}
		=\bra{\psi}(\bcreatespin)^2(\bannspin)^2\ket{\psi} + 4\bra{\psi}(N-\Sc+\Id)\ket{\psi}
		>0.
	\]
	This last inequality follows because \( \bra{\psi}(\bcreatespin)^2(\bannspin)^2\ket{\psi}  \) is the norm squared of \( (\bannspin)^2\ket{\psi}  \) (which is non-negative);
	and we have already shown above that \( \bra{\psi}(N-\Sc+\Id)\ket{\psi} > 0 \).
	Thus, \( (\bcreatespin)^2 = (\tcreate)^2 \) is injective.
	Therefore, \( \tcreate \) is injective.

	By direct computation,
	we find that \( \comm*{\vec{L}^2}{(\bcreatespin)^2} = 0 \)
	and that \( \comm*{\vec{L}^2}{\spmspin}=0 \).
	That is, \( \tcreate \) preserves the eigenvalues of \( \vec{L}^2 \).
	Since \( \comm{J_i}{\bcreatespin} = \comm{J_i}{\screatespin} = 0 \),
	the eigenvalues of \( \vec{J^2} \) and \( J_3 \) are also preserved by \( \tcreate \).
\end{proof}

By repeatedly applying \( \tcreate \) to a basis for \( V_{\ell+1/2}\oplus V_{\ell - 1/2} \),
we can form a basis for the Fock space.
Explicitly, such a basis is
\[
	\ket{n,\ell,j,j_3}
	= 
	\begin{cases}
		(\tcreate)^{n-\ell}(J_-)^{j-j_3}(\bcreate_1 + i\bcreate_2)^\ell S_+\vac
		& \text{if}~j = \ell+(1/2)\\
		(\tcreate)^{n-\ell}(J_-)^{j-j_3}((\bcreate_1 + i\bcreate_2)^\ell + (\bcreate_1 + i\bcreate_2)^{\ell-1}\bcreate_3S_+)\vac
		& \text{if}~j = \ell-(1/2).
	\end{cases}
\]
These vectors are obviously linearly independent, having different eigenvalues.
That these eigenvectors span the Fock space can be proven with simple dimensional analysis of the eigenspaces of \( N \).
Note that \( n,\ell(\ell+1),j(j+1),j_3 \) are the eigenvalues of \( N,\vec{L}^2, \vec{J}^2, J_3 \) respectively.
Hence,
\begin{align*}
	n&=0,1,2,\ldots &
	\ell&=0,1,\ldots,n, \\
	j &= \ell\pm(1/2), &
	j_3 &= -j, -j+1,\ldots,j-1,j.
\end{align*}
Note also that the basis vectors have not been normalised.

\begin{prop}\label{prop:1+3opket}
We have the following actions on the above basis vectors.
	\begin{enumerate}[label=(\roman*)]
		\item\label{item:1+3screateket} \(\displaystyle
			\screatespin\ket{n,\ell,j,j_3}
			=
			\begin{cases}
				\ket{n+1,\ell,j,j_3} & \text{if \( n-\ell \) even}\\
				0 & \text{if \( n-\ell \) odd}
			\end{cases}
			\)
		\item\label{item:1+3sannket} \(\displaystyle
			\sannspin\ket{n,\ell,j,j_3}
			=
			\begin{cases}
				0 & \text{if \( n-\ell \) even}\\
				\ket{n-1,\ell,j,j_3} & \text{if \( n-\ell \) odd}
			\end{cases}
			\)
		\item\label{item:1+3s0ket} \( \displaystyle
			\Sc\ket{n,\ell,j,j_z} = -\frac{1}{2}(-1)^{n-\ell}\ket{n,\ell,j,j_z}
			\)
		\item\label{item:1+3bcreateket} \(\displaystyle
			\bcreatespin\ket{n,\ell,\ell\pm\frac{1}{2},j_3}
			=
			\ket{n+1,\ell\pm 1,\ell\pm\frac{1}{2},j_3}
			\)
		\item \label{item:1+3bannket} \(\displaystyle
			\bannspin\ket{n,\ell,j,j_3} = 
			\begin{cases}
				\displaystyle
				(n - j + \frac{1}{2}(-1)^{n-\ell})\ket{n-1,\ell+1,j,j_3} & \text{if}~j=\ell+(1/2)\\
				\displaystyle
				(n + j + 1 + \frac{1}{2}(-1)^{n-\ell})\ket{n-1,\ell-1,j,j_3} & \text{if}~j=\ell-(1/2).
			\end{cases}
		\)
	\end{enumerate}
\end{prop}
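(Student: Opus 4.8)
The plan is to reduce the whole proposition to a handful of ``seed'' computations on the highest-weight vectors $j_3=j$, after which parts~\ref{item:1+3screateket}--\ref{item:1+3bcreateket} fall straight out of the colour Lie superalgebra relations and only part~\ref{item:1+3bannket} requires real work. First I would note that each of $\screatespin,\sannspin,\Sc,\bcreatespin,\bannspin$ commutes with $J_-$ (for $\bcreatespin,\screatespin$ this is stated; for $\bannspin,\sannspin$ take adjoints; $\Sc=-\tfrac12\beta$ commutes with $\vec L$ and $\vec S$) and with $\tcreate=(\bcreatespin)^2\sannspin+\screatespin$, and that all five preserve $\vec J^2$ and $J_3$. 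Hence it suffices to prove each identity on $\ket{n,\ell,j,j}$ and then apply $(J_-)^{j-j_3}$. Write $u_\ell^+=(\bcreate_1+i\bcreate_2)^\ell S_+\vac$ and $u_\ell^-=(\bcreate_1+i\bcreate_2)^\ell\vac+(\bcreate_1+i\bcreate_2)^{\ell-1}\bcreate_3 S_+\vac$ for the two seeds, so that $\ket{n,\ell,\ell\pm\tfrac12,\ell\pm\tfrac12}=(\tcreate)^{n-\ell}u_\ell^{\pm}$ with the correspondence $j=\ell+\tfrac12\leftrightarrow u_\ell^+$, $j=\ell-\tfrac12\leftrightarrow u_\ell^-$.

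Next I would record the seed data. Using $\bann_k\vac=\sann_k\vac=S_-\vac=0$, $S_3\vac=-\tfrac12\vac$, and $\beta\vac=\vac$ (the computation of \cref{lem:betavac} goes through verbatim, since $\comm{\sannspin}{\screatespin}=-2\Sc=\beta$ follows from the fermion relations together with $\comm{\Sc}{\spmspin}=\pm\spmspin$), a one-line expansion of $\bcreatespin=2\sum_k\bcreate_kS_k$ and $\bannspin=2\sum_k\bann_kS_k$ via $S_1\pm iS_2=S_\pm$ gives $\Sc u_\ell^{\pm}=-\tfrac12u_\ell^{\pm}$, $\sannspin u_\ell^{\pm}=0$, $\bannspin u_\ell^+=0$, and the nontrivial identities $\bcreatespin u_\ell^+=u_{\ell+1}^-$, $\bcreatespin u_\ell^-=(\bcreatespin)^2u_{\ell-1}^+$, $\bannspin u_\ell^-=(2\ell+1)u_{\ell-1}^+$. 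Since $\Sc$ has only eigenvalues $\pm\tfrac12$ and $\comm{\Sc}{\screatespin}=\screatespin$, $\comm{\Sc}{\sannspin}=-\sannspin$, the operator $\sannspin$ annihilates every $\Sc=-\tfrac12$ vector and $\screatespin$ annihilates every $\Sc=+\tfrac12$ vector; hence $\tcreate$ acts as $\screatespin$ on $\Sc=-\tfrac12$ vectors and as $(\bcreatespin)^2\sannspin$ on $\Sc=+\tfrac12$ vectors. Combined with $(\tcreate)^2=(\bcreatespin)^2$ (established in the excerpt) and $\comm{\screatespin}{\bcreatespin}=0$, this yields the normal form $\ket{n,\ell,j,j}=(\bcreatespin)^{n-\ell}u_\ell^{\pm}$ when $n-\ell$ is even and $\ket{n,\ell,j,j}=(\bcreatespin)^{n-\ell-1}\screatespin\,u_\ell^{\pm}$ when $n-\ell$ is odd.

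From this normal form, parts~\ref{item:1+3screateket}--\ref{item:1+3s0ket} are immediate: $\Sc$ reads off the parity of $n-\ell$; $\screatespin$ either prepends a $\screatespin$ (even case) or annihilates the vector via $(\screatespin)^2=0$ (odd case); $\sannspin$ either annihilates (even case, using $\sannspin u_\ell^\pm=0$) or removes the $\screatespin$ via $\acomm{\sannspin}{\screatespin}=\Id$ (odd case). Part~\ref{item:1+3bcreateket} follows because $\bcreatespin$ commutes with $\tcreate$ (clear from $\comm{\bcreatespin}{(\bcreatespin)^2}=0$ and $\comm{\bcreatespin}{\sannspin}=\comm{\bcreatespin}{\screatespin}=0$), so $\bcreatespin\ket{n,\ell,j,j}=(\tcreate)^{n-\ell}\bcreatespin u_\ell^\pm$, into which the seed values $\bcreatespin u_\ell^+=u_{\ell+1}^-$ and $\bcreatespin u_\ell^-=(\bcreatespin)^2u_{\ell-1}^+=(\tcreate)^2u_{\ell-1}^+$ slot exactly, with $\ell\mapsto\ell\pm1$.

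The crux is part~\ref{item:1+3bannket}, because $\bannspin$ does \emph{not} commute with $\tcreate$. Here I would push $\bannspin$ through the normal form using the single extra colour-bracket relation $\comm{\bannspin}{(\bcreatespin)^2}=2\bcreatespin$ which, with $\comm{\bcreatespin}{(\bcreatespin)^2}=0$, gives inductively $\bannspin(\bcreatespin)^{2k}=(\bcreatespin)^{2k}\bannspin+2k(\bcreatespin)^{2k-1}$. Applying this to the two normal forms, commuting $\bannspin$ past $\screatespin$ in the odd case, and feeding in $\bannspin u_\ell^+=0$, $\bannspin u_\ell^-=(2\ell+1)u_{\ell-1}^+$ and the values of $\bcreatespin u_\ell^\pm$, one re-expresses the result in normal form with $\ell\mapsto\ell\pm1$. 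I expect the only delicate point is keeping the four subcases (parity of $n-\ell$ against $j=\ell\pm\tfrac12$) straight and checking that the coefficients $2k$, $2\ell+1$ and $2\ell+1+2k$ collapse to $n-j+\tfrac12(-1)^{n-\ell}$ and $n+j+1+\tfrac12(-1)^{n-\ell}$; it is precisely the vanishing of $\bannspin$ on $u_\ell^+$ versus its non-vanishing on $u_\ell^-$ that makes the two coefficient families differ. Finally, the claim that these $\ket{n,\ell,j,j_3}$ span the Fock space is the dimension count on $N$-eigenspaces already indicated in the text, so no extra argument is needed there.
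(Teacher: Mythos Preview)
Your approach is essentially the same as the paper's: push each operator through the defining expression $(\tcreate)^{n-\ell}(J_-)^{j-j_3}(\cdots)\vac$ using the colour-algebra commutation relations and the identity $(\tcreate)^2=(\bcreatespin)^2$. The paper only writes out part~\ref{item:1+3screateket} explicitly, working directly with the colour bracket $\cbrak{\screatespin}{(\tcreate)^{n-\ell}}$ and carrying the $J_-$, $\bcreate_j$, $S_+$ factors along, whereas you first strip off $(J_-)^{j-j_3}$ and then rewrite $(\tcreate)^{n-\ell}u_\ell^{\pm}$ in the ``normal form'' $(\bcreatespin)^{n-\ell}u_\ell^{\pm}$ or $(\bcreatespin)^{n-\ell-1}\screatespin u_\ell^{\pm}$. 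This normal form is a clean organizational device and makes parts~\ref{item:1+3screateket}--\ref{item:1+3bcreateket} genuinely immediate; your treatment of part~\ref{item:1+3bannket} via $\comm{\bannspin}{(\bcreatespin)^{2k}}=2k(\bcreatespin)^{2k-1}$ and the seed values $\bannspin u_\ell^+=0$, $\bannspin u_\ell^-=(2\ell+1)u_{\ell-1}^+$ is correct and fills in what the paper leaves as ``similar''.

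One correction to your opening paragraph: it is \emph{not} true that all five operators commute with $\tcreate$. You yourself note this for $\bannspin$ in part~\ref{item:1+3bannket}, but it also fails for $\sannspin$, $\screatespin$ and $\Sc$ (for instance $\comm{\screatespin}{\tcreate}=(\bcreatespin)^2\comm{\screatespin}{\sannspin}=-(\bcreatespin)^2\beta\neq 0$). Fortunately you never use that claim---your actual argument relies only on commutation with $J_-$ (for the reduction to $j_3=j$) and on the normal form (for everything else)---so this is a misstatement in the overview rather than a gap in the proof.
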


To illustrate the use of relations from the \( \Ztzt \)-graded colour Lie superalgebra \( \osp_{01}(1|2) \oplus \spl_{10}(1|1) \), we prove part~\ref{item:1+3screateket}.

\begin{proof}[Proof of \ref{item:1+3screateket}]
	Observe that
	\[
		\acomm{\screatespin}{\tcreate}
		= (\bcreatespin)^2 = (\tcreate)^2
	\]
	and hence
	\[
		\comm{\screatespin}{(\tcreate)^2} = \acomm{\screatespin}{\tcreate}\tcreate - \tcreate\acomm{\screatespin}{\tcreate} = 0,
	\]
	using the colour Jacobi identity.
	Arguing inductively, for \( n-\ell\geq0 \) we find that
	\[
		\cbrak{\screatespin}{(\tcreate)^{n-\ell}} = 
		\begin{cases}
			0 & \text{if \( n-\ell \) even}\\
			(\tcreate)^{n-\ell + 1} & \text{if \( n-\ell \) odd}
		\end{cases}
	\]
	where \( \cbrak{\cdot}{\cdot} \) is the bracket of the \( \Ztzt \)-graded colour Lie superalgebra \( \osp_{01}(1|2) \oplus \spl_{10}(1|1) \).

	Observe that \( \comm*{\screatespin}{J_-} = \comm*{\screatespin}{\bcreate_j} = \comm*{\screatespin}{S_+} = 0 \) for \( j=1,2,3 \).
	Furthermore,
	since \( \sannspin\vac = 0 \), we have that
	\[
		\screatespin\vac = ((\bcreatespin)^2\sannspin + \screatespin)\vac = \tcreate\vac.
	\]
	We also have that \( \comm*{\tcreate}{S_+} = \comm*{\tcreate}{\bcreate_j} = \comm*{\tcreate}{J_-} = 0 \).

	First, assume that \( j=\ell+(1/2) \).
	Using the above calculations, assume \( j=\ell+(1/2) \) to find
	\begin{align*}
		\screatespin\ket{n,\ell,j,j_3}
		&= \screatespin(\tcreate)^{n-\ell}(J_-)^{j-j_3}(\bcreate_1 + i\bcreate_2)^\ell S_+\vac\\
		&=
		\begin{cases}
			(\tcreate)^{n-\ell}\screatespin(J_-)^{j-j_3}(\bcreate_1 + i\bcreate_2)^\ell S_+\vac & \text{if \( n-\ell \) even}\\
			-(\tcreate)^{n-\ell}\screatespin(J_-)^{j-j_3}(\bcreate_1 + i\bcreate_2)^\ell S_+\vac \\
			\qquad +(\tcreate)^{n-\ell+1}(J_-)^{j-j_3}(\bcreate_1 + i\bcreate_2)^\ell S_+\vac & \text{if \( n-\ell \) odd}\\
		\end{cases}\\
		&=
		\begin{cases}
			(\tcreate)^{n-\ell}(J_-)^{j-j_3}(\bcreate_1 + i\bcreate_2)^\ell S_+\screatespin\vac & \text{if \( n-\ell \) even}\\
			-(\tcreate)^{n-\ell}(J_-)^{j-j_3}(\bcreate_1 + i\bcreate_2)^\ell S_+\screatespin\vac \\
			\qquad +(\tcreate)^{n-\ell+1}(J_-)^{j-j_3}(\bcreate_1 + i\bcreate_2)^\ell S_+\vac & \text{if \( n-\ell \) odd}\\
		\end{cases}\\
		&=
		\begin{cases}
			(\tcreate)^{n-\ell}(J_-)^{j-j_3}(\bcreate_1 + i\bcreate_2)^\ell S_+\tcreate\vac & \text{if \( n-\ell \) even}\\
			-(\tcreate)^{n-\ell}(J_-)^{j-j_3}(\bcreate_1 + i\bcreate_2)^\ell S_+\tcreate\vac \\
			\qquad +(\tcreate)^{n-\ell+1}(J_-)^{j-j_3}(\bcreate_1 + i\bcreate_2)^\ell S_+\vac & \text{if \( n-\ell \) odd}\\
		\end{cases}\\
		&=
		\begin{cases}
			\ket{n+1,\ell,j,j_3} & \text{if \( n-\ell \) even}\\
			0 & \text{if \( n-\ell \) odd}.
		\end{cases}
	\end{align*}
	If \( j=\ell-(1/2) \) the proof is similar.
\end{proof}

The proofs of parts~\ref{item:1+3sannket}--\ref{item:1+3bannket} follow similarly by direct computation.

Since \( H = \sqrt{2m\omega}(\bcreatespin\sannspin + \bannspin\screatespin) - 2m\Sc \),
we can use~\cref{prop:1+3opket} to compute the energy spectrum and eigenvectors.
\begin{thm}
	\leavevmode
	\begin{enumerate}[label=(\roman*)]
		\item\label{item:1+3oddeigenvector} If \( n-j+(1/2) \) is odd then
			\[
				(E^{\pm}_{n,j} - m)\ket{n,j-(1/2),j,j_3} + \sqrt{2m\omega}\ket{n,j+(1/2),j,j_3}
			\]
			is an energy eigenvector for \( H \) with energy
			\[
				E^{\pm}_{n,j} = \pm\sqrt{2m\omega(n+j) + 3m\omega + m^2}.
			\]
		\item\label{item:1+3eveneigenvector} If \( n-j+(1/2) \) is even 
			then
			\[
				\sqrt{2m\omega}\ket{n,j-(1/2),j,j_3} + (E^{\pm}_{n,j} - m)\ket{n,j+(1/2),j,j_3}
			\]
			is an energy eigenvector for \( H \) with energy
			\[
				E^\pm_{n,j} = \pm\sqrt{2m\omega(n-j)+m\omega+m^2}.
			\]
	\end{enumerate}
\end{thm}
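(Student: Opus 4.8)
The plan is to diagonalise $H$ on each two-dimensional subspace $W_{n,j,j_3} = \spn\{\ket{n,j-(1/2),j,j_3},\,\ket{n,j+(1/2),j,j_3}\}$ and read off the eigenpairs, using the explicit formulas of \cref{prop:1+3opket} together with $H = \sqrt{2m\omega}(\bcreatespin\sannspin + \bannspin\screatespin) - 2m\Sc$. Write $\ket{A} = \ket{n,j-(1/2),j,j_3}$ and $\ket{B} = \ket{n,j+(1/2),j,j_3}$, so that $\ell_A = j-(1/2)$, $\ell_B = j+(1/2)$, and the integers $n-\ell_A = n-j+(1/2)$ and $n-\ell_B = n-j-(1/2)$ have opposite parity; the two cases in the statement are precisely the two choices of which of $\ket{A}$, $\ket{B}$ carries the even value of $n-\ell$.

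First I would check that $W_{n,j,j_3}$ is $H$-invariant and write down the matrix of $\restrict{H}{W_{n,j,j_3}}$. By \cref{prop:1+3opket}, $\sannspin$ and $\screatespin$ fix $\ell$ while shifting $n$ by $\mp1$, and the $\bpmspin$ applied afterwards shift $\ell$ by $\pm1$ while restoring $n$; since applying the first operator flips the parity of $n-\ell$, one must pick the correct branch of the piecewise formulas in \cref{prop:1+3opket}\ref{item:1+3bcreateket},\ref{item:1+3bannket} at each stage. The outcome is that, on $W_{n,j,j_3}$, $\bcreatespin\sannspin$ annihilates one of $\ket{A},\ket{B}$ and sends the other to the remaining basis vector with coefficient $1$, while $\bannspin\screatespin$ does the opposite with coefficient $n-j+(1/2)$ or $n+j+(3/2)$ (according to parity); together with \cref{prop:1+3opket}\ref{item:1+3s0ket} for the $-2m\Sc$ term (which contributes $\pm m$ on the diagonal according to $(-1)^{n-\ell}$), this exhibits $\restrict{H}{W_{n,j,j_3}}$ as a traceless $2\times2$ matrix, for instance $\begin{pmatrix}-m & \sqrt{2m\omega}\,(n+j+3/2)\\ \sqrt{2m\omega} & m\end{pmatrix}$ in case \ref{item:1+3oddeigenvector}.

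Because the matrix is traceless, its eigenvalues are $\pm\sqrt{-\det}$; computing the determinant gives $(E^{\pm}_{n,j})^2 = m^2 + 2m\omega(n+j) + 3m\omega$ in case \ref{item:1+3oddeigenvector} and $(E^{\pm}_{n,j})^2 = m^2 + 2m\omega(n-j) + m\omega$ in case \ref{item:1+3eveneigenvector}, as claimed. (Equivalently, one may observe that $H^2 = 2m\omega(N + \CLS + \Id) + m^2\Id$ acts as the scalar $(E^{\pm}_{n,j})^2$ on $W_{n,j,j_3}$, reading the $\CLS$-eigenvalue off from $\CLS = -2\beta\vec{L}\vdot\vec{S} - \beta$, the Clebsch--Gordan value $\vec{L}\vdot\vec{S} = \tfrac12(j(j+1)-\ell(\ell+1)-\tfrac34)$, and $\beta = -2\Sc$ with \cref{prop:1+3opket}\ref{item:1+3s0ket}.) For the eigenvectors, applying $H + E^{\pm}_{n,j}$ to $\ket{A}$ in case \ref{item:1+3oddeigenvector} (respectively to $\ket{B}$ in case \ref{item:1+3eveneigenvector}) lands in $\ker(H - E^{\pm}_{n,j})$, since $(H - E^{\pm}_{n,j})(H + E^{\pm}_{n,j}) = H^2 - (E^{\pm}_{n,j})^2$ vanishes on $W_{n,j,j_3}$; a one-line computation with the matrix above shows this yields exactly the vectors written in the statement.

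I expect the main obstacle to be the parity bookkeeping: because $\screatespin$ and $\sannspin$ change $n$ by one, the parity of $n-\ell$ flips after the first operator is applied, and the right branch of the piecewise formulas in \cref{prop:1+3opket}\ref{item:1+3bcreateket},\ref{item:1+3bannket} must be chosen at every step if the coefficients $n-j+(1/2)$ and $n+j+(3/2)$ are to come out correctly. A secondary, minor point is the boundary case $j = n+(1/2)$, where $\ket{n,j+(1/2),j,j_3}$ does not occur in the Fock space: there $W_{n,j,j_3}$ is one-dimensional, the surviving $\bannspin$-coefficient $n-j+(1/2)$ vanishes, and only the $E^{+}_{n,j}$ branch is meaningful, reducing to the evident eigenvector; I would dispose of this case separately.
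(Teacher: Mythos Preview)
Your proposal is correct and is exactly the approach the paper intends: the paper's proof is the single line ``The proof is a direct application of \cref{prop:1+3opket}'', and your plan of restricting $H = \sqrt{2m\omega}(\bcreatespin\sannspin + \bannspin\screatespin) - 2m\Sc$ to the two-dimensional span $W_{n,j,j_3}$ via \cref{prop:1+3opket} and diagonalising the resulting traceless $2\times2$ matrix is precisely that application, carried out in detail. Your parity bookkeeping and the sample matrix in case~\ref{item:1+3oddeigenvector} are correct, and your observation about the degenerate boundary $j=n+\tfrac12$ (which the paper does not isolate) is a useful addendum.
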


The proof is a direct application of \cref{prop:1+3opket}

\subsection*{Acknowledgment}

This work was partially supported by the Australian Research Council through Discovery Project DP200101339.

\appendix
\section{Colour algebra for \texorpdfstring{\( H \)}{H}}\label{sec:1+3Halg}
In this appendix, we present a \( \Ztwo[3] \)-graded colour Lie superalgebra that contains the Hamiltonian.
Contrast this with
the \( \Ztzt \)-graded colour Lie superalgebra used in \cref{sec:1+3}, which does not contain the Hamiltonian (though does contain its square). 

To obtain a colour algebra that closes,
we need to split \( H, \, N,\,\bannspin, \) and \(\bcreatespin \) into coordinate components.
Let
\begin{align*}
	H_j &= \sqrt{2m\omega}(\bcreate_j\sann_j + \bann_j\screate_j), &
	N_j &= \bcreate_j\bann_j + \screate_j\sann_j,\\
	\bannspin[j] &= 2\bann_j S_j, &
	\bcreatespin[j] &= 2\bcreate_j S_j 
\end{align*}
for \( j=1,2,3 \).
Then,
\begin{align*}
	H &= \sum_{j}H_j + m\beta, &
	N &= \sum_j N_j + \beta - \Id,\\
	\bannspin &= \sum_j\bannspin[j], &
	\bcreatespin &= \sum_j\bcreatespin[j].
\end{align*}
Set
\[
	\HxH_j = \frac{1}{4m\omega}(\vec{H}\times \vec{H})_j
\]
where \( \vec{H} = (H_1,H_2,H_3) \).
Note that 
\[
	\HxH_j = \sum_{k,\ell}i(\varepsilon_{jk\ell})^2\bcreate_k\bann_\ell S_j.
\]

Consider the \( \Ztwo[3] \)-graded colour Lie superalgebra spanned by the following operators
\begin{align*}
	000\text{-sector:}&\, I,N_1,\,N_2,\,N_3, &
	111\text{-sector:}&\, \beta,\sannspin,\,\screatespin, \\
	001\text{-sector:}&\, H_1,\, \beta H_1, &
	110\text{-sector:}&\, \bannspin[1],\, \bcreatespin[1],\, \HxH_1, L_1 S_1, \\
	010\text{-sector:}&\, H_2,\, \beta H_2, &
	101\text{-sector:}&\, \bannspin[2],\, \bcreatespin[2],\, \HxH_2, L_2 S_2, \\
	100\text{-sector:}&\, H_3,\, \beta H_3, &
	011\text{-sector:}&\, \bannspin[3],\, \bcreatespin[3],\, \HxH_3, L_3 S_3
\end{align*}
This colour Lie superalgebra satisfies the following relations for \( j,k=1,2,3 \):
\begin{align*}
	\acomm{\sannspin}{\screatespin} &= \Id, &
	\comm{\bannspin[j]}{\bcreatespin[j]} &= \Id,\\
	\comm{N_j}{H_k} &= (\delta_{jk} - 1)\beta H_k, &
	\comm{N_j}{\beta H_k} &= (\delta_{jk} - 1) H_k,\\
	\comm{N_j}{\sannspin} &= -\sannspin, &
	\comm{N_j}{\screatespin} &= \screatespin,\\
	\comm{N_j}{\bannspin[k]} &= -\delta_{jk}\bannspin[k], &
	\comm{N_j}{\bcreatespin[k]} &= \delta_{jk}\bcreatespin[k] ,\\
	\comm{N_j}{\HxH_k} &= (1-\delta_{jk})L_kS_k, &
	\comm{N_j}{L_kS_k} &= (1-\delta_{jk})\HxH_k ,\\
	\acomm{H_j}{\sannspin} &= -\acomm{\beta H_j}{\sannspin} = \sqrt{2m\omega}\,\bannspin[j], &
	\acomm{H_j}{\screatespin} &= \acomm{\beta H_j}{\screatespin} = \sqrt{2m\omega}\,\bcreatespin[j],\\
	\comm{H_j}{\bannspin[j]} &= \comm{\beta H_j}{\bannspin[j]} = -\sqrt{2m\omega}\,\sannspin, &
	\comm{H_j}{\bcreatespin[j]} &= -\comm{\beta H_j}{\bcreatespin[j]} = \sqrt{2m\omega}\,\screatespin,\\
	\acomm{H_j}{H_j} &= \acomm{\beta H_j}{\beta H_j} = 4m\omega N_j, &
	\comm{L_jS_j}{\HxH_j} &= \frac{1}{2}(N_{j+1} - N_{j+2}) 
\end{align*}
as well as the following relations for \( j\neq k \):
\begin{align*}
	\comm{H_j}{H_k} &= -\comm{\beta H_j}{\beta H_k} = 4m\omega \sum_{\ell}\varepsilon_{jk\ell}\HxH_\ell, \\
	\comm{H_j}{\beta H_k} &= 4m\omega \sum_{\ell}\varepsilon_{jk\ell}L_\ell S_{\ell}, \\
	\acomm{H_j}{\HxH_k} &= -\acomm{\beta H_j}{L_kS_k} = \frac{1}{2}\sum_{\ell}\varepsilon_{jk\ell}\beta H_\ell, \\
	\acomm{H_j}{L_kS_k} &= -\acomm{\beta H_j}{\HxH_k} = -\frac{1}{2}\sum_{\ell}\varepsilon_{jk\ell} H_{\ell}, \\
	\acomm{\HxH_j}{\bannspin[k]} &= -\acomm{L_jS_j}{\bannspin[k]} = \frac{1}{2}\sum_{\ell}\varepsilon_{jk\ell}\bannspin[\ell], \\
	\acomm{\HxH_j}{\bcreatespin[k]} &= \acomm{L_jS_j}{\bcreatespin[k]} = -\frac{1}{2}\sum_{\ell}\varepsilon_{jk\ell}\bcreatespin[\ell], \\
	\acomm{L_jS_j}{\HxH_k} &= \frac{1}{2} \sum_{\ell}\varepsilon_{jk\ell}\HxH_\ell, \\
	\acomm{L_jS_j}{L_kS_k} &= \acomm{\HxH_j}{\HxH_k} = -\frac{1}{2}\sum_{\ell}\varepsilon_{jk\ell}L_{\ell}S_{\ell} .
\end{align*}
All remaining (anti)commutation relations are zero.

The Hamiltonian \( H \) is not a homogeneous element, but satisfies the following relations
\begin{align*}
	\cbrak{H}{\bannspin} &= -3\sqrt{2m\omega}\,\sannspin, &
	\cbrak{H}{\bcreatespin} &= 3\sqrt{2m\omega}\,\sannspin, \\
	\cbrak{H}{\sannspin} &= \sqrt{2m\omega}\,\bannspin, &
	\cbrak{H}{\screatespin} &= \sqrt{2m\omega}\,\bcreatespin.
\end{align*}
Compare the above relations with~\eqref{eq:1+1interleave} and \eqref{eq:1+2interleave}.

\begin{thebibliography}{10}
\expandafter\ifx\csname url\endcsname\relax
  \def\url#1{\texttt{#1}}\fi
\expandafter\ifx\csname urlprefix\endcsname\relax\def\urlprefix{URL }\fi
\expandafter\ifx\csname href\endcsname\relax
  \def\href#1#2{#2} \def\path#1{#1}\fi

\bibitem{IN2025}
R.~Ito, A.~Nago, \href{https://doi.org/10.1016/j.nuclphysb.2025.116877}{Novel
  possible symmetries of {$S$}-matrix generated by {$\mathbb{Z}_2^n$}-graded
  {L}ie superalgebras}, Nuclear Phys. B 1014 (2025) Paper No. 116877, 10.
\newblock \href {https://doi.org/10.1016/j.nuclphysb.2025.116877}
  {\path{doi:10.1016/j.nuclphysb.2025.116877}}.
\newline\urlprefix\url{https://doi.org/10.1016/j.nuclphysb.2025.116877}

\bibitem{Tolstoy2014b}
V.~N. Tolstoy, \href{https://doi.org/10.1134/S1547477114070449}{Once more on
  parastatistics}, Phys. Part. Nuclei Lett. 11 (2014) 933--937.
\newblock \href {https://doi.org/10.1134/S1547477114070449}
  {\path{doi:10.1134/S1547477114070449}}.
\newline\urlprefix\url{https://doi.org/10.1134/S1547477114070449}

\bibitem{Green1953}
H.~S. Green, \href{https://doi.org/10.1103/PhysRev.90.270}{A generalized method
  of field quantization}, Phys. Rev. (2) 90 (1953) 270--273.
\newblock \href {https://doi.org/10.1103/PhysRev.90.270}
  {\path{doi:10.1103/PhysRev.90.270}}.
\newline\urlprefix\url{https://doi.org/10.1103/PhysRev.90.270}

\bibitem{GreenbergMessiah1965}
O.~W. Greenberg, A.~M.~L. Messiah,
  \href{https://doi.org/10.1103/PhysRev.138.B1155}{Selection rules for
  parafields and the absence of para particles in nature}, Phys. Rev. (2) 138
  (1965) B1155--B1167.
\newblock \href {https://doi.org/10.1103/PhysRev.138.B1155}
  {\path{doi:10.1103/PhysRev.138.B1155}}.
\newline\urlprefix\url{https://doi.org/10.1103/PhysRev.138.B1155}

\bibitem{AKTT2016}
N.~Aizawa, Z.~Kuznetsova, H.~Tanaka, F.~Toppan,
  \href{https://doi.org/10.1093/ptep/ptw176}{{$\mathbb{Z}_2\times\mathbb{Z}_2$}-graded
  {L}ie symmetries of the {L}\'{e}vy-{L}eblond equations}, PTEP. Prog. Theor.
  Exp. Phys.~(12) (2016) 123A01, 26.
\newblock \href {https://doi.org/10.1093/ptep/ptw176}
  {\path{doi:10.1093/ptep/ptw176}}.
\newline\urlprefix\url{https://doi.org/10.1093/ptep/ptw176}

\bibitem{AKTT2017}
N.~Aizawa, Z.~Kuznetsova, H.~Tanaka, F.~Toppan,
  \href{https://doi.org/10.1007/978-3-319-69164-0_11}{Generalized supersymmetry
  and the l{\'e}vy-leblond equation}, in: S.~Duarte, J.-P. Gazeau, S.~Faci,
  T.~Micklitz, R.~Scherer, F.~Toppan (Eds.), Physical and Mathematical Aspects
  of Symmetries, Springer International Publishing, Cham, 2017, pp. 79--84.
\newblock \href {https://doi.org/10.1007/978-3-319-69164-0_11}
  {\path{doi:10.1007/978-3-319-69164-0_11}}.
\newline\urlprefix\url{https://doi.org/10.1007/978-3-319-69164-0_11}

\bibitem{LevyLeblond1967}
J.-M. L\'{e}vy-Leblond,
  \href{https://doi.org/10.1007/BF01646020}{Nonrelativistic particles and wave
  equations}, Comm. Math. Phys. 6~(4) (1967) 286--311.
\newblock \href {https://doi.org/10.1007/BF01646020}
  {\path{doi:10.1007/BF01646020}}.
\newline\urlprefix\url{https://doi.org/10.1007/BF01646020}

\bibitem{Dirac1928}
P.~A.~M. Dirac, \href{https://doi.org/10.1098/rspa.1928.0023}{The quantum
  theory of the electron}, Proc. R. Soc. A 117~(778) (1928) 610--624.
\newblock \href {https://doi.org/10.1098/rspa.1928.0023}
  {\path{doi:10.1098/rspa.1928.0023}}.
\newline\urlprefix\url{https://doi.org/10.1098/rspa.1928.0023}

\bibitem{Toppan2015}
F.~Toppan, \href{https://dx.doi.org/10.1088/1742-6596/597/1/012071}{Symmetries
  of the {S}chrödinger equation and algebra/superalgebra duality}, J. Phys.
  Conf. Ser. 597~(1) (2015) 012071.
\newblock \href {https://doi.org/10.1088/1742-6596/597/1/012071}
  {\path{doi:10.1088/1742-6596/597/1/012071}}.
\newline\urlprefix\url{https://dx.doi.org/10.1088/1742-6596/597/1/012071}

\bibitem{AIT2024}
N.~Aizawa, R.~Ito, T.~Tanaka,
  \href{https://doi.org/10.3934/math.2024513}{{$\mathcal{N} = 2$} double graded
  supersymmetric quantum mechanics via dimensional reduction}, AIMS Math. 9~(5)
  (2024) 10494--10510.
\newblock \href {https://doi.org/10.3934/math.2024513}
  {\path{doi:10.3934/math.2024513}}.
\newline\urlprefix\url{https://doi.org/10.3934/math.2024513}

\bibitem{Bruce2024}
A.~J. Bruce, \href{https://doi.org/10.5506/APhysPolB.55.8-A1}{A novel
  generalisation of supersymmetry: quantum {$\mathbb{Z}^2_2$}-oscillators and
  their `superisation'}, Acta Phys. Polon. B 55~(8) (2024) Paper No. A1, 11.
\newblock \href {https://doi.org/10.5506/APhysPolB.55.8-A1}
  {\path{doi:10.5506/APhysPolB.55.8-A1}}.
\newline\urlprefix\url{https://doi.org/10.5506/APhysPolB.55.8-A1}

\bibitem{Ryan2025}
M.~Ryan, \href{https://dx.doi.org/10.1088/1751-8121/ad978a}{Graded colour {L}ie
  superalgebras for solving {L}\'evy-{L}eblond equations}, J. Phys. A 58~(1)
  (2025) 015204.
\newblock \href {https://doi.org/10.1088/1751-8121/ad978a}
  {\path{doi:10.1088/1751-8121/ad978a}}.
\newline\urlprefix\url{https://dx.doi.org/10.1088/1751-8121/ad978a}

\bibitem{IMC1967}
D.~It\^o, K.~Mori, E.~Carriere, \href{https://doi.org/10.1007/BF02721775}{An
  example of dynamical systems with linear trajectory}, Il Nuovo cimento, A
  51~(4) (1967) 1119--1121.
\newblock \href {https://doi.org/10.1007/BF02721775}
  {\path{doi:10.1007/BF02721775}}.
\newline\urlprefix\url{https://doi.org/10.1007/BF02721775}

\bibitem{MoshinskySzczepaniak1989}
M.~Moshinsky, A.~Szczepaniak,
  \href{https://doi.org/10.1088/0305-4470/22/17/002}{The {D}irac oscillator},
  J. Phys. A 22~(17) (1989) L817--L819.
\newblock \href {https://doi.org/10.1088/0305-4470/22/17/002}
  {\path{doi:10.1088/0305-4470/22/17/002}}.
\newline\urlprefix\url{https://doi.org/10.1088/0305-4470/22/17/002}

\bibitem{MorenoZentella1989}
M.~Moreno, A.~Zentella,
  \href{https://doi.org/10.1088/0305-4470/22/17/003}{Covariance, {CPT} and the
  {F}oldy-{W}outhuysen transformation for the {D}irac oscillator}, J. Phys. A
  22~(17) (1989) L821--L825.
\newblock \href {https://doi.org/10.1088/0305-4470/22/17/003}
  {\path{doi:10.1088/0305-4470/22/17/003}}.
\newline\urlprefix\url{https://doi.org/10.1088/0305-4470/22/17/003}

\bibitem{BMyRNYSB1990}
J.~Bentez, R.~P. Martnez~y Romero, H.~N. N\'uez-Y\'epez, A.~L. Salas-Brito,
  \href{https://doi.org/10.1103/PhysRevLett.64.1643}{Solution and hidden
  supersymmetry of a {D}irac oscillator}, Phys. Rev. Lett. 64 (1990)
  1643--1645.
\newblock \href {https://doi.org/10.1103/PhysRevLett.64.1643}
  {\path{doi:10.1103/PhysRevLett.64.1643}}.
\newline\urlprefix\url{https://doi.org/10.1103/PhysRevLett.64.1643}

\bibitem{deLange1991}
O.~L. de~Lange, \href{https://dx.doi.org/10.1088/0305-4470/24/3/025}{Algebraic
  properties of the {D}irac oscillator}, J. Phys. A 24~(3) (1991) 667.
\newblock \href {https://doi.org/10.1088/0305-4470/24/3/025}
  {\path{doi:10.1088/0305-4470/24/3/025}}.
\newline\urlprefix\url{https://dx.doi.org/10.1088/0305-4470/24/3/025}

\bibitem{CFLU1991}
O.~Casta\~nos, A.~Frank, R.~L\'opez, L.~F. Urrutia,
  \href{https://doi.org/10.1103/PhysRevD.43.544}{Soluble extensions of the
  {D}irac oscillator with exact and broken supersymmetry}, Phys. Rev. D 43
  (1991) 544--547.
\newblock \href {https://doi.org/10.1103/PhysRevD.43.544}
  {\path{doi:10.1103/PhysRevD.43.544}}.
\newline\urlprefix\url{https://doi.org/10.1103/PhysRevD.43.544}

\bibitem{Crawford1993}
J.~P. Crawford, \href{https://doi.org/10.1063/1.530348}{The {D}irac oscillator
  and local automorphism invariance}, J. Math. Phys. 34~(10) (1993) 4428--4435.
\newblock \href {https://doi.org/10.1063/1.530348}
  {\path{doi:10.1063/1.530348}}.
\newline\urlprefix\url{https://doi.org/10.1063/1.530348}

\bibitem{Villalba1994}
V.~M. Villalba, \href{https://doi.org/10.1103/PhysRevA.49.586}{Exact solution
  of the two-dimensional {D}irac oscillator}, Phys. Rev. A 49 (1994) 586--587.
\newblock \href {https://doi.org/10.1103/PhysRevA.49.586}
  {\path{doi:10.1103/PhysRevA.49.586}}.
\newline\urlprefix\url{https://doi.org/10.1103/PhysRevA.49.586}

\bibitem{QuesneTkachuk2005}
C.~Quesne, V.~M. Tkachuk,
  \href{https://doi.org/10.1088/0305-4470/38/8/011}{Dirac oscillator with
  nonzero minimal uncertainty in position}, J. Phys. A 38~(8) (2005)
  1747--1765.
\newblock \href {https://doi.org/10.1088/0305-4470/38/8/011}
  {\path{doi:10.1088/0305-4470/38/8/011}}.
\newline\urlprefix\url{https://doi.org/10.1088/0305-4470/38/8/011}

\bibitem{QuesneMoshinsky1990}
C.~Quesne, M.~Moshinsky,
  \href{https://doi.org/10.1088/0305-4470/23/12/011}{Symmetry {L}ie algebra of
  the {D}irac oscillator}, J. Phys. A 23~(12) (1990) 2263--2272.
\newblock \href {https://doi.org/10.1088/0305-4470/23/12/011}
  {\path{doi:10.1088/0305-4470/23/12/011}}.
\newline\urlprefix\url{https://doi.org/10.1088/0305-4470/23/12/011}

\bibitem{BeckersDebergh1990}
J.~Beckers, N.~Debergh,
  \href{https://doi.org/10.1103/PhysRevD.42.1255}{Supersymmetry,
  {F}oldy-{W}outhuysen transformations, and relativistic oscillators}, Phys.
  Rev. D (3) 42~(4) (1990) 1255--1259.
\newblock \href {https://doi.org/10.1103/PhysRevD.42.1255}
  {\path{doi:10.1103/PhysRevD.42.1255}}.
\newline\urlprefix\url{https://doi.org/10.1103/PhysRevD.42.1255}

\bibitem{Quesne1991}
C.~Quesne, \href{https://doi.org/10.1142/S0217751X91000836}{Supersymmetry and
  the {D}irac oscillator}, Internat. J. Modern Phys. A 6~(9) (1991) 1567--1589.
\newblock \href {https://doi.org/10.1142/S0217751X91000836}
  {\path{doi:10.1142/S0217751X91000836}}.
\newline\urlprefix\url{https://doi.org/10.1142/S0217751X91000836}

\bibitem{WWJ2021}
J.-Y. Wei, Q.~Wang, J.~Jing,
  \href{https://dx.doi.org/10.1088/1674-1056/ac05ab}{Supersymmetric structures
  of {D}irac oscillators in commutative and noncommutative spaces}, Chin. Phys.
  B 30~(11) (2021) 110307.
\newblock \href {https://doi.org/10.1088/1674-1056/ac05ab}
  {\path{doi:10.1088/1674-1056/ac05ab}}.
\newline\urlprefix\url{https://dx.doi.org/10.1088/1674-1056/ac05ab}

\bibitem{BMDS2007a}
A.~Bermudez, M.~A. Martin-Delgado, E.~Solano,
  \href{https://doi.org/10.1103/PhysRevLett.99.123602}{Mesoscopic superposition
  states in relativistic {L}andau levels}, Phys. Rev. Lett. 99 (2007) 123602.
\newblock \href {https://doi.org/10.1103/PhysRevLett.99.123602}
  {\path{doi:10.1103/PhysRevLett.99.123602}}.
\newline\urlprefix\url{https://doi.org/10.1103/PhysRevLett.99.123602}

\bibitem{BMDS2007b}
A.~Bermudez, M.~A. Martin-Delgado, E.~Solano,
  \href{https://doi.org/10.1103/PhysRevA.76.041801}{Exact mapping of the $2+1$
  {D}irac oscillator onto the {J}aynes-{C}ummings model: Ion-trap experimental
  proposal}, Phys. Rev. A 76 (2007) 041801.
\newblock \href {https://doi.org/10.1103/PhysRevA.76.041801}
  {\path{doi:10.1103/PhysRevA.76.041801}}.
\newline\urlprefix\url{https://doi.org/10.1103/PhysRevA.76.041801}

\bibitem{FVSBKMS2013}
J.~A. Franco-Villafa\~ne, E.~Sadurn\'{\i}, S.~Barkhofen, U.~Kuhl,
  F.~Mortessagne, T.~H. Seligman,
  \href{https://link.aps.org/doi/10.1103/PhysRevLett.111.170405}{First
  experimental realization of the dirac oscillator}, Phys. Rev. Lett. 111
  (2013) 170405.
\newblock \href {https://doi.org/10.1103/PhysRevLett.111.170405}
  {\path{doi:10.1103/PhysRevLett.111.170405}}.
\newline\urlprefix\url{https://link.aps.org/doi/10.1103/PhysRevLett.111.170405}

\bibitem{QuimbayStrange2013}
C.~Quimbay, P.~Strange, \href{https://arxiv.org/abs/1311.2021}{Graphene physics
  via the dirac oscillator in (2+1) dimensions} (2013).
\newblock \href {http://arxiv.org/abs/1311.2021} {\path{arXiv:1311.2021}}.
\newline\urlprefix\url{https://arxiv.org/abs/1311.2021}

\bibitem{Boumali2015}
A.~Boumali,
  \href{https://dx.doi.org/10.1088/0031-8949/90/4/045702}{Thermodynamic
  properties of the graphene in a magnetic field via the two-dimensional dirac
  oscillator}, Phys. Scr. 90~(4) (2015) 045702.
\newblock \href {https://doi.org/10.1088/0031-8949/90/4/045702}
  {\path{doi:10.1088/0031-8949/90/4/045702}}.
\newline\urlprefix\url{https://dx.doi.org/10.1088/0031-8949/90/4/045702}

\bibitem{BM2018}
K.~Bakke, H.~Mota, \href{https://doi.org/10.1140/epjp/i2018-12268-6}{{D}irac
  oscillator in the cosmic string spacetime in the context of gravity’s
  rainbow}, Eur. Phys. J. Plus 133~(10) (2018).
\newblock \href {https://doi.org/10.1140/epjp/i2018-12268-6}
  {\path{doi:10.1140/epjp/i2018-12268-6}}.
\newline\urlprefix\url{https://doi.org/10.1140/epjp/i2018-12268-6}

\bibitem{HHdeM2019}
M.~Hosseinpour, H.~Hassanabadi, M.~de~Montigny,
  \href{https://doi.org/10.1140/epjc/s10052-019-6830-4}{The {D}irac oscillator
  in a spinning cosmic string spacetime}, Eur. Phys. J. C 79~(4) (2019) 311.
\newblock \href {https://doi.org/10.1140/epjc/s10052-019-6830-4}
  {\path{doi:10.1140/epjc/s10052-019-6830-4}}.
\newline\urlprefix\url{https://doi.org/10.1140/epjc/s10052-019-6830-4}

\bibitem{Ahmed2022}
F.~Ahmed, \href{https://doi.org/10.1038/s41598-022-12745-w}{Relativistic
  motions of spin-zero quantum oscillator field in a global monopole space-time
  with external potential and {AB}-effect}, Sci. Rep. 12~(1) (2022) 8794--8794.
\newline\urlprefix\url{https://doi.org/10.1038/s41598-022-12745-w}

\bibitem{NogamiToyama1996}
Y.~Nogami, F.~M. Toyama, \href{https://doi.org/10.1139/p96-018}{Coherent state
  of the {D}irac oscillator}, Can. J. Phys. 74~(3-4) (1996) 114--121.
\newblock \href {https://doi.org/10.1139/p96-018} {\path{doi:10.1139/p96-018}}.
\newline\urlprefix\url{https://doi.org/10.1139/p96-018}

\bibitem{GR2021}
P.~Ghosh, P.~Roy, \href{https://doi.org/10.1080/09500340.2021.1876261}{Quasi
  coherent state of the {D}irac oscillator}, J. Modern Opt. 68~(1) (2021)
  56--62.
\newblock \href {https://doi.org/10.1080/09500340.2021.1876261}
  {\path{doi:10.1080/09500340.2021.1876261}}.
\newline\urlprefix\url{https://doi.org/10.1080/09500340.2021.1876261}

\bibitem{SVdJ2018}
N.~I. Stoilova, J.~Van~der Jeugt,
  \href{https://doi.org/10.1088/1751-8121/aaae9a}{The
  {$\mathbb{Z}_2\times\mathbb{Z}_2$}-graded {L}ie superalgebra {$pso(2m+1|2n)$}
  and new parastatistics representations}, J. Phys. A 51~(13) (2018) 135201,
  17.
\newblock \href {https://doi.org/10.1088/1751-8121/aaae9a}
  {\path{doi:10.1088/1751-8121/aaae9a}}.
\newline\urlprefix\url{https://doi.org/10.1088/1751-8121/aaae9a}

\bibitem{Toppan2021a}
F.~Toppan, \href{https://doi.org/10.1088/1751-8121/abe2f2}{{$\mathbb{Z}_2
  \times \mathbb{Z}_2$}-graded parastatistics in multiparticle quantum
  {H}amiltonians}, J. Phys. A 54~(11) (2021) Paper No. 115203, 35.
\newblock \href {https://doi.org/10.1088/1751-8121/abe2f2}
  {\path{doi:10.1088/1751-8121/abe2f2}}.
\newline\urlprefix\url{https://doi.org/10.1088/1751-8121/abe2f2}

\bibitem{Toppan2021b}
F.~Toppan, \href{https://doi.org/10.1088/1751-8121/ac17a5}{Inequivalent
  quantizations from gradings and {$\mathbb{Z}_2\times\mathbb{Z}_2$}
  parabosons}, J. Phys. A 54~(35) (2021) Paper No. 355202, 21.
\newblock \href {https://doi.org/10.1088/1751-8121/ac17a5}
  {\path{doi:10.1088/1751-8121/ac17a5}}.
\newline\urlprefix\url{https://doi.org/10.1088/1751-8121/ac17a5}

\bibitem{Zhang2023}
R.~Zhang, Para-spaces, their differential analysis and an application to
  green's quantisation (2023).
\newblock \href {http://arxiv.org/abs/2312.04250} {\path{arXiv:2312.04250}}.

\bibitem{SVdJ2024}
N.~I. Stoilova, J.~Van~der Jeugt,
  \href{https://doi.org/10.1088/1751-8121/ad2726}{Orthosymplectic
  {$\mathbb{Z}_2\times\mathbb{Z}_2$}-graded {L}ie superalgebras and
  parastatistics}, J. Phys. A 57~(9) (2024) Paper No. 095202, 13.
\newblock \href {https://doi.org/10.1088/1751-8121/ad2726}
  {\path{doi:10.1088/1751-8121/ad2726}}.
\newline\urlprefix\url{https://doi.org/10.1088/1751-8121/ad2726}

\bibitem{SVdJ2025}
N.~I. Stoilova, J.~Van~der Jeugt,
  \href{https://arxiv.org/abs/2501.15541}{$\mathbf{{\mathbb Z}_2\ \times
  {\mathbb Z}_2}$-graded lie (super)algebras and generalized quantum
  statistics} (2025).
\newblock \href {http://arxiv.org/abs/2501.15541} {\path{arXiv:2501.15541}}.
\newline\urlprefix\url{https://arxiv.org/abs/2501.15541}

\bibitem{SVdJ2005-1}
N.~I. Stoilova, J.~Van~der Jeugt, \href{https://doi.org/10.1063/1.1827324}{A
  classification of generalized quantum statistics associated with classical
  {L}ie algebras}, J. Math. Phys. 46~(3) (2005) 033501, 16.
\newblock \href {https://doi.org/10.1063/1.1827324}
  {\path{doi:10.1063/1.1827324}}.
\newline\urlprefix\url{https://doi.org/10.1063/1.1827324}

\bibitem{SVdJ2005-2}
N.~I. Stoilova, J.~Van~der Jeugt, \href{https://doi.org/10.1063/1.2104287}{A
  classification of generalized quantum statistics associated with basic
  classical {L}ie superalgebras}, J. Math. Phys. 46~(11) (2005) 113504, 22.
\newblock \href {https://doi.org/10.1063/1.2104287}
  {\path{doi:10.1063/1.2104287}}.
\newline\urlprefix\url{https://doi.org/10.1063/1.2104287}

\bibitem{SVdJ2008}
N.~I. Stoilova, J.~Van~der Jeugt,
  \href{https://dx.doi.org/10.1088/1742-6596/128/1/012061}{Algebraic
  generalization of quantum statistics}, J. Phys. Conf. Ser. 128~(1) (2008)
  012061.
\newblock \href {https://doi.org/10.1088/1742-6596/128/1/012061}
  {\path{doi:10.1088/1742-6596/128/1/012061}}.
\newline\urlprefix\url{https://dx.doi.org/10.1088/1742-6596/128/1/012061}

\bibitem{KamefuchiTakahashi1962}
S.~Kamefuchi, Y.~Takahashi,
  \href{https://doi.org/10.1016/0029-5582(62)90447-9}{A generalization of field
  quantization and statistics}, Nuclear Phys. 36 (1962) 177--206.
\newblock \href {https://doi.org/10.1016/0029-5582(62)90447-9}
  {\path{doi:10.1016/0029-5582(62)90447-9}}.
\newline\urlprefix\url{https://doi.org/10.1016/0029-5582(62)90447-9}

\bibitem{GanchevPalev1980}
A.~C. Ganchev, T.~D. Palev, \href{https://doi.org/10.1063/1.524502}{A {L}ie
  superalgebraic interpretation of the para-{B}ose statistics}, J. Math. Phys.
  21~(4) (1980) 797--799.
\newblock \href {https://doi.org/10.1063/1.524502}
  {\path{doi:10.1063/1.524502}}.
\newline\urlprefix\url{https://doi.org/10.1063/1.524502}

\bibitem{STS2010}
E.~Sadurn\'i, J.~M. Torres, T.~H. Seligman,
  \href{https://doi.org/10.1088/1751-8113/43/28/285204}{Dynamics of a {D}irac
  oscillator coupled to an external field: a new class of solvable problems},
  J. Phys. A 43~(28) (2010) 285204, 16.
\newblock \href {https://doi.org/10.1088/1751-8113/43/28/285204}
  {\path{doi:10.1088/1751-8113/43/28/285204}}.
\newline\urlprefix\url{https://doi.org/10.1088/1751-8113/43/28/285204}

\end{thebibliography}

\end{document}